\newtheorem{definition}{Definition}
\newtheorem{theorem}{Theorem}
\newtheorem{lemma}{Lemma}
\newcommand{\longname}{IDentity with Locality Hash}
\newcommand{\name}{IDL }
\newcommand{\namens}{IDL}
\begin{document}
\title{IDentity with Locality: An ideal hash for gene sequence search}


\author{Aditya Desai}
\authornotemark[1]
\affiliation{%
  \institution{Rice University}
  \streetaddress{6100 Main St}
  \city{Houston}
  \state{Texas}
  \postcode{77005}
}
\email{apd10@rice.edu}

\author{Gaurav Gupta}
\authornotemark[1]
\affiliation{%
  \institution{Rice University}
  \streetaddress{6100 Main St}
  \city{Houston}
  \state{Texas}
  \postcode{77005}
}
\email{gaurav.gupta@rice.edu}

\author{Tianyi Zhang}
\authornote{Equal contribution}
\affiliation{%
  \institution{Rice University}
  \streetaddress{6100 Main St}
  \city{Houston}
  \state{Texas}
  \postcode{77005}
}
\email{tz21@rice.edu}

\author{Anshumali Shrivastava}
\affiliation{%
  \institution{Rice University, ThirdAI Corp.}
  \streetaddress{6100 Main St}
  \city{Houston}
  \state{Texas}
  \postcode{77005}
}
\email{as143@rice.edu}

\begin{abstract}
Gene sequence search is a fundamental operation in computational genomics with broad applications in medicine, evolutionary biology, metagenomics, and more. Due to the petabyte scale of genome archives, most gene search systems now use hashing-based data structures such as \textit{Bloom Filters} (BF). The state-of-the-art systems such as \textit{Compact bit-slicing signature index} (COBS) \cite{Bingmann2019COBSAC} and \textit{Repeated And Merged Bloom filters} (RAMBO) \cite{gupta2021fast} use BF with Random Hash (RH) functions for gene representation and identification. The standard recipe is to cast the gene search problem as a sequence of membership problems testing if each subsequent gene substring (called kmer) of $Q$ is present in the set of kmers of the entire gene database $D$. We observe that RH functions, which are crucial to the memory and the computational advantage of BF, are also detrimental to the system performance of gene-search systems. While subsequent kmers being queried are likely very similar, RH, oblivious to any similarity, uniformly distributes the kmers to different parts of potentially large BF, thus triggering excessive cache misses and causing system slowdown.

We propose a novel hash function called the Identity with Locality (IDL) hash family, which co-locates the keys close in input space without causing collisions. This approach ensures both cache locality and key preservation. IDL functions can be a drop-in replacement for RH functions and help improve the performance of information retrieval systems. We give a simple but practical construction of IDL function families and show that replacing the RH with IDL functions reduces cache misses by a factor of $5\times$, thus improving query and indexing times of SOTA methods such as COBS and RAMBO by factors up to $2\times$ without compromising their quality. We also provide a theoretical analysis of the false positive rate of BF with IDL functions. Our hash function is the first study that bridges \textit{Locality Sensitive Hash} (LSH) and RH to obtain cache efficiency. Our design and analysis could be of independent theoretical interest.
\end{abstract}

\maketitle

\section{Introduction}

Data mining systems such as genome indices heavily rely on hash functions~\cite{practicalML,bigSI}: functions that map arbitrary size values to a fixed range $[m]=\{0,\dots,{m-1}\}$. For instance, some fundamental data structures used for search and estimation problems are hash tables, \textit{Bloom filters}~\cite{bloom1970space}, \textit{count-sketches}~\cite{cormode2005improved}, etc., which use Random Hash (RH) functions and their variants~\cite{universalHashcarter1979, murmur, xxHash}. While RH and other hash functions have revolutionized search systems~\cite{ioffe2010improved,IndykAndMotwani, wang2017flash, gupta2022bliss, baeza1999modern, bigSI, elworth2020petabytes,practicalML, ryanDeepLeanring} by allowing us to build probabilistic data structures that exponentially improve memory and computational requirements, they are not conducive for system performance. Search systems often serve a batch of queries, and RH functions deployed in such systems map each query independently and can lead to system inefficiencies such as high cache-miss rates and page faults. The general problem of processing a burst of queries with a random underlying memory access pattern has received little attention. This problem affects the efficiency of gene sequence search systems which is the focus of this paper.

The problem of gene sequence search is identifying the species to which the given gene sequence belongs. Each species is represented by its genome: the complete set of DNA sequences  (typically represented as a sequence of "ACGT" characters) for that organism. The genome archives recording such sequences for multiple species are typically extensive. For instance, the European Bioinformatics Institute has reportedly stored around 160 Petabytes of raw DNA sequences as of 2018~\cite{EBI2019}. \textit{Bloom Filter}(BF), a probabilistic data structure used for membership testing, is the key component of state-of-the-art data structures such as \textit{Compact Bit-sliced Signature index} (COBS)~\cite{Bingmann2019COBSAC}, \textit{Repeated And Merged Bloom filters} (RAMBO)~\cite{gupta2021fast} and others~\cite{SSBT, Bingmann2019COBSAC, kmerBF} to efficiently search a given DNA sequence in petabytes of genomic data.
The search proceeds as follows. Each sequence of DNA is broken into kmers using a sliding window of size $k$ and stride 1 (see Figure \ref{fig:kmerBFInsert} for illustration), and each kmer is sequentially queried for membership inside the BF built over the DNA database. Thus, each sequence search leads to a burst of queries. RH functions, the primary reason for the success of BFs, being agnostic to the sequence of queries, distributes them randomly, causing us to fetch different caches/pages for each subsequent query. This is a highly inefficient usage of the cache /page, where only one bit is used from each fetch. Ideally, we want to co-locate the bits for subsequent queries to effectively utilize the cache and page mechanisms. 
\begin{figure}[t]
  \centering
  \includegraphics[scale=0.3]{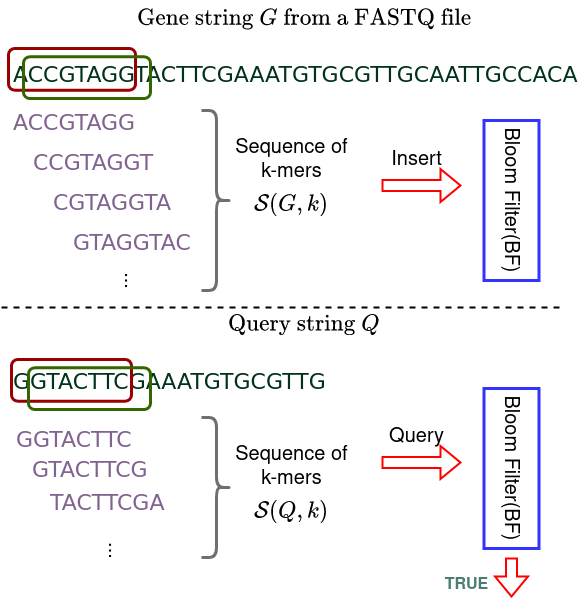}
  \caption{ An overview of gene string tokenization process followed by BF insertion and query. The long gene string of genome
  is broken into kmers  (base substring of length $k$) using a moving window over the string, and then each kmer is inserted into the BF. While querying, again, the input sequence is broken into kmers  and the membership of each kmer is tested with the BF. If all the kmer pass the membership test, then the query is implied to be present in the corresponding 
  genome. }
   \vspace{-3mm}
  \label{fig:kmerBFInsert}
\end{figure}

One alternative might be to consider \textit{Locality-Sensitive Hash} (LSH) functions~\cite{IndykAndMotwani} 
that recognize the similarity between subsequent queries. However, the biggest strength of LSH in mapping similar tokens to the same value is a weakness in this scenario. As LSH maps similar keys to a single value, it does not preserve the identity of these keys. RH and LSH lie at two ends of a spectrum. LSH combines similar elements, causes them to have identical values, and induces a lapse of identity; RH, on the other hand, preserves identity but at the cost of being agnostic to any similarity. However, to improve gene-search system performance, we need a hash function that draws properties from both RH and LSH. In this paper, we provide a new hash family: Identity with locality (IDL) hash that offers a way to achieve the complete spectrum between RH and LSH. 

IDL hash family strikes a balance between RH and LSH families and provides the best of both worlds. It co-locates the values of similar input tokens but does not cause them to collide, thus preserving their identity (up to random collisions). Additionally, it randomly distributes dissimilar tokens across the entire range (see Figure \ref{fig:lphcollision}). IDL hash is a drop-in replacement of RH functions in the SOTA gene search methods. We generalize the BF, the core data structure in SOTA methods, by replacing traditional RH functions with \name functions. The \textit{Bloom filter} with \name hash function, called \namens-BF, is suitable for efficient sequential querying operations. \namens-BF makes the subsequent kmer in the DNA sequence co-locate in a range without colliding (up to random collisions). Thus, when a cache line  (alt. page) is fetched to access the bit of the current kmer, with high probability, the same cache line (alt. page) also contains the bit for subsequent kmer. Thus, \namens-BF reduces cache misses/page faults and improves the latency of indexing and querying DNA sequences. 

We describe the IDL hash family and provide a general recipe to construct \name functions in Section \ref{sec:family}. The design of \name function for gene search and its efficient implementation is described in Section \ref{sec:genesearch}. We analyze the false positive rate of \namens-BF in Section \ref{sec:analysis}. This analysis can be of independent theoretical interest. Also, we extensively evaluate IDL functions in SOTA methods of vanilla BF, COBS~\cite{Bingmann2019COBSAC}, and RAMBO~\cite{gupta2021fast} in experiments Section \ref{sec:exp}. 

Here, we assess the efficacy of IDL hash functions for gene search and find that they result in $\sim5\times$ fewer L1 and L3 cache misses during sequential queries compared to RH. The cache efficiency of IDL functions also leads to a faster performance in COBS, with a speedup of approximately $\sim1.4\times$ times for query and $1.45\times$ times for indexing. Furthermore, IDL functions demonstrate improvements in RAMBO, with up to $2.2\times$ times improvement in query and up to  $1.7\times$ improvement in indexing time.

\section{Background}
\begin{table}[t]
\centering
\caption{Notations used in the paper}
\label{tab:notations}
\resizebox{\linewidth}{!}{
\begin{tabular}{|l|l|l|l|}
\hline 
notation & description & notation & description\\
\hline 
$m$ & range/ Bloom filter size & $\mathcal{L}$ & Family of LSH\\
\hline 
$n$ & number of insertions & $\mathcal{H}$ &Family of universal hash\\
\hline 
$\eta$ & number of hash in BF & $\mathcal{I}$  & Family of \name hash\\
\hline 
$x, y$ & keys/kmers & $\rho(.)$& random function\\
\hline 
$q$ & query kmer& $\phi(.)$& LSH function \\
\hline 
$Q$ & query gene & $\psi(.)$& \name hash function\\
\hline 
$G$ & whole gene & $\mathcal{S}(.)$& function string $\rightarrow$ Set\\
\hline 
$k$ & kmer size& $\mathcal{M}(.)$& MinHash\\
\hline 
$t$ & sub-kmer size& $\zeta(.), \ \mathcal{J}(.)$& Jaccard Similarity\\
\hline 
$N$ & number of files & $p_1,p_2$ & probabilities\\
\hline 
$L$ & IDL random hash range & $r_1,r_2$ & distances\\
\hline 
 &  & $\mathbf{R}, \mathbf{N}$ & Real and Natural numbers \\
\hline
\end{tabular}}
   \vspace{-4mm}
\end{table}
\subsection{Random hash functions }
A hash function maps a given key from a set, say $U$, to integers in $[m] = \{0, 1, 2...{m-1}\}$. i.e. $\rho : U \rightarrow [m]$ for some $m \in \mathbf{N}$. A purely random hash function requires the storage of $O(|U|)$ and thus is not feasible to implement in practice. 
However, there are families of hash functions that require O(1) storage and computation costs but only provide looser guarantees on the collisions of different keys. A family of the hash function $\mathcal{H}$ is called $t$-universal if  for all $x_1,x_2,\ldots, x_t \in U$ and $x_1\neq x_2 \ldots \neq x_t $, the following holds,
\begin{equation}
\mathbf{Pr}_{\rho \leftarrow \mathcal{H}}(\rho(x_1)= \rho(x_2)\ldots =\rho(x_t))\leq 1/m^{t-1}
\end{equation}
where $m$ is the range of the hash function $\rho$. i.e. the probability of $t$ keys colliding is bounded by $1/m^{t-1}$. The probability is over the choice of a hash function $\rho$ being chosen uniformly at random from the family $\mathcal{H}$. Generally, for most applications, 2-universal hash families suffice~\cite{mitzenmacher2008simple}. Popular implementations like \textit{MurmurHash}~\cite{murmur}, \textit{xxhash}~\cite{xxHash} are used in practice. We will refer to hash families promoting randomness in mapping as Random Hash (RH) families in the entire manuscript. However, it should be noted that in practice RH functions are generally implemented using $t$-universal hash families for small $t$ (eg. $t=2$ or $t=3$). Specifically, we use \textit{MurmurHash} for all our experiments.
\begin{figure}
    \centering
    \includegraphics[scale=0.3]{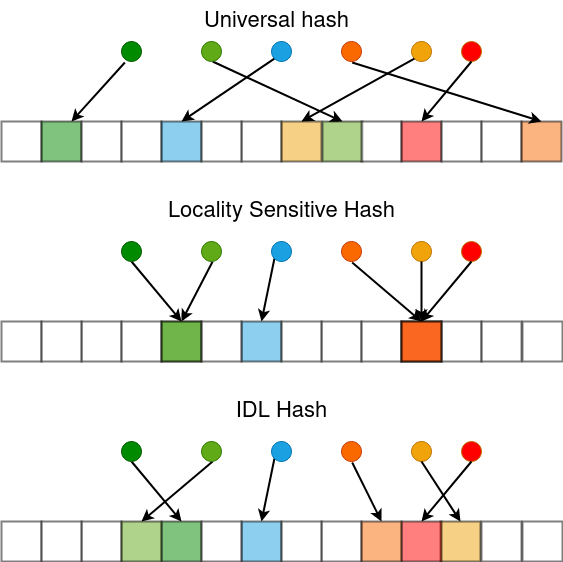}
    \caption{Illustration of different hash functions' behavior. While RH  disregards similarity in input space, LSH causes similar elements to collide. \namens, on the other hand maintains locality while discouraging collisions. }
      \vspace{-2.5mm}
    \label{fig:lphcollision}
\end{figure}

\begin{figure*}
  \centering
  \includegraphics[scale=0.3]{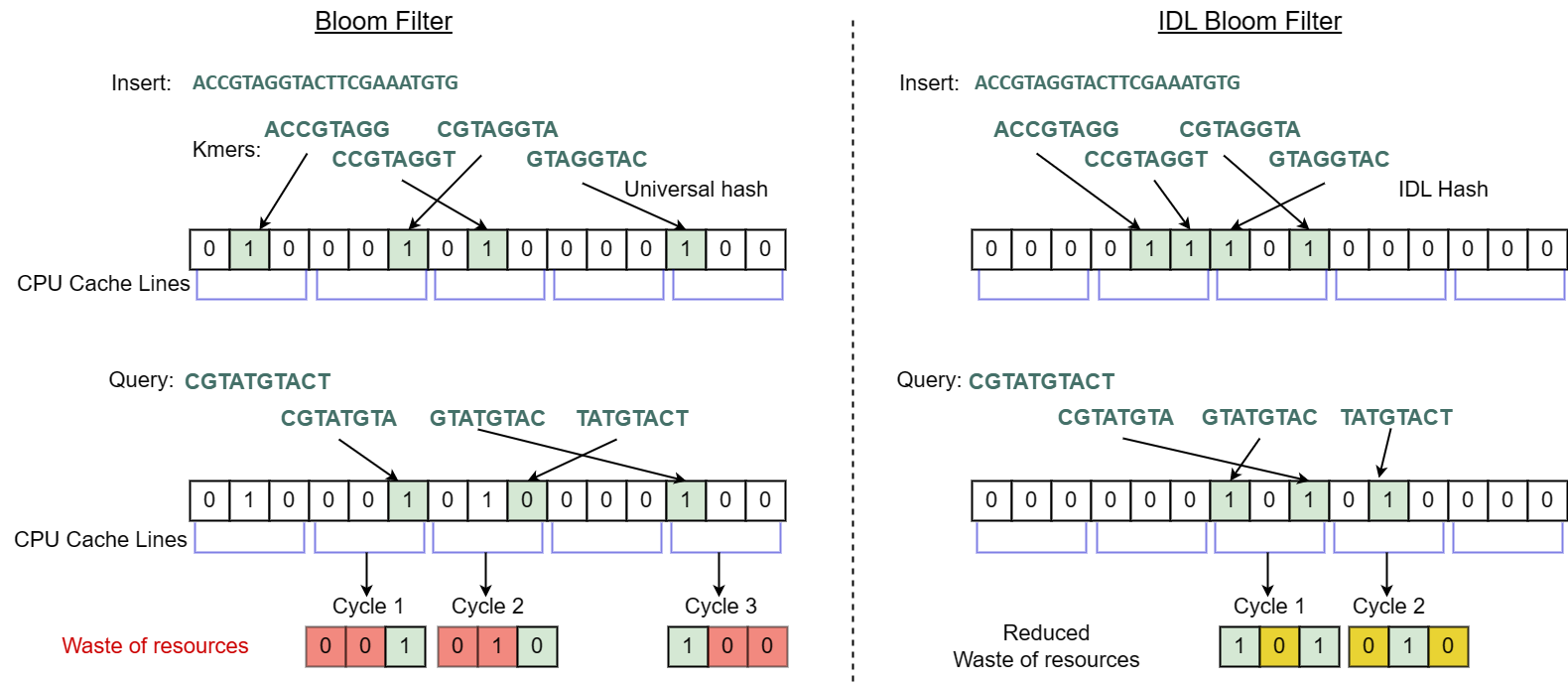}
  \caption{Illustration of gene sequence index and search using BF and IDL-BF. [Left] Traditional BFs causes inefficient usage of cache due to randomly mapping each subsequent kmer [Right] IDL-BF is cache-efficient, which uses the similarity of subsequent kmers to co-locate their bit signatures and thus use cache lines effectively.}
  \vspace{-2.5mm}
  \label{fig:ckbf}
\end{figure*}
\subsection{Locality sensitive hash}

\textit{Locality sensitive hashing }(LSH)~~\cite{IndykAndMotwani} is a technique from computational geometry that was introduced for approximate nearest neighbor search. An LSH family is defined as follows,

\begin{definition} [ LSH Family]\ A family  $\mathcal{L} = \{ \phi: U \rightarrow V \}$ is called
                $(r_1,r_2,p_a,p_b)$ sensitive with respect to a metric space $(U,d)$ if for any two points $x,y \in U$, $r_1 < r_2, p_a >p_b$ and $\phi$ chosen uniformly from $\mathcal{L}$ satisfies the following:
               
                \begin{itemize}
                                \item if $d(x,y)\le r_1$ then ${Pr}_{\phi \leftarrow \mathcal{L}}(\phi(x) = \phi(y)) \ge p_a$
                                \item if $ d(x,y)\ge r_2$ then ${Pr}_{\phi \leftarrow \mathcal{L}}(\phi(x) = \phi(y)) \le p_b$
                \end{itemize}
                \label{def:lsh}
\end{definition}
This implies that if the given two points $x$ and $y$ are close w.r.t distance metric $d$, the corresponding hash values $\phi(x)$ and $\phi(y)$ are the same (i.e., they collide) with high probability, and points that are distinct w.r.t $d$ have a low probability of collision. Similarity metrics (defined as the inverse of distance metric) are sometimes used to interpret LSH. Many similarity metrics accept the LSH family of functions. Some examples are \textit{Jaccard}~\cite{broder1997resemblance}, \textit{Euclidean}~\cite{datar2004locality} and \textit{angular distances}~\cite{charikar2002similarity}.

While $\phi \in \mathcal{L}, \phi : U \rightarrow V$ outputs a value in $V$, for some applications, we want the functions to map to a range $[m]$ for some $m \in \mathbf{N}$. In such a case, we use the rehashing trick and rehash the output of the LSH function into the range $[m]$. Let $\phi: U \rightarrow V$ be an LSH function with the probability of collision $\zeta(x,y)$ for $x,y \in U$. Then the probability of collision of LSH function $\phi_m$ that is rehashed to range $m$ is 
\begin{equation}
    \mathbf{Pr}( \phi_m(x) = \phi_m(y)) = \zeta(x,y) + (1 - \zeta(x,y)) \frac{1}{m}
\end{equation}

\paragraph{\textbf{MinHash}}
Min-wise independent permutations~\cite{broder1997resemblance}, or \textit{MinHash}, is an LSH function used to estimate the similarity between sets. Let $U$ be a set and $X, Y \subseteq U$. Let $\rho$ define a permutation over $U$. Then \textit{MinHash} $\mathcal{M}$ is defined as
\begin{equation}
    \mathcal{M}(X) = \min \{ \rho(x) | x \in X \}
\end{equation}
The probability of collision for $X, Y \subseteq U$ is given by \textit{Jaccard similarity} ($\mathcal{J}$) between the two sets.
\begin{equation}
\mathbf{Pr}(\mathcal{M}(x) = \mathcal{M}(y)) = \mathcal{J}(X, Y) = \frac{|X\cap Y|}{|X\cup Y|}
\end{equation}
\subsection{Bloom filter}
\textit{Bloom Filter} (BF) is a probabilistic data structure used for membership testing. It is parameterized by a range $m$ and a number of hash functions $\eta$. Given a set $A$, each element of $A$, say $a$, is hashed using $\eta$ hash functions $\rho_1(a)$, $\rho_2(a)$,$\ldots$, $\rho_{\eta}(a)$ of range $m$ and the bits at these locations are set to $1$.  To answer the query "$q\in A?$", the same hash functions $\rho_1(q)$, $\rho_2(q)$,$\ldots$, $\rho_{\eta}(q)$ are used to get bit locations for the item $q$. If all the bits are $1$ at these hash locations, $q$ is inferred as part of the set $\textit{A}$.

One important metric evaluating the quality of BF is False Positive Rate (FPR). It is the rate at which elements that do not belong to the set $A$ are reported as members of $A$. Under the assumption of complete independence (i.e., hash functions are purely random), the FPR, say $\epsilon$, for BF of range $m$, number of hash functions $\eta$ and populated with a set $A$ of size $|A|=n$ is given by
\begin{equation}
    \epsilon = \left(1-\left[1-{\frac {1}{m}}\right]^{\eta n}\right)^{\eta} \approx \left(1-e^{-\eta n/m}\right)^{\eta} 
\end{equation}
For a given $m$, the FPR is minimized when $\eta = (\textrm{ln}(2)m)/n $. Thus, under optimal choice of $\eta$, one can determine the range $m$ for a required FPR $\epsilon$, $m = - (n \textrm{ln} (\epsilon))/(\textrm{ln}^2 (2))$. Note that BF has no false negative rate as any seen item $x$ will set all the bits $\rho_1(x), \rho_2(x),.. \rho_{\eta}(x)$.
\subsection{Gene search with vanilla BF} \label{subsec:genesearch}
In this section, we will explain how gene search works with vanilla BF. Given a DNA string for query, $Q$, the task is to determine if the query $Q$ belongs to given species represented by its full-length genome $G$. The gene string of the full-length genome $G$ is usually very large, with an average length of $100$ Million~\cite{gupta2021fast}. Given this large genome length, it is practically infeasible to match the string precisely. Additionally, it is useful to understand how much of the given query gene string $Q$, if not all, matches with a given genome $G$. The standard practice to address this is to represent the genome $G$ as a set of $k$ length substrings called kmers~\cite{kmerSelection}.

Given a gene string, say $G$ ( and $Q$), we divide this string into a sequence of substrings of $k$ base pairs called kmers using a moving window. This is illustrated in Figure \ref{fig:ckbf}. Let this sequence be generated by function $\mathcal{S}(.)$, 
\begin{equation}
    \mathcal{S}(G, k) = \{G[i:i+k]\}_{i=1}^{(|G| - k)}
\end{equation}
The Membership Testing (MT) and Multiple Set Membership Testing (MSMT) of a gene string query $Q$ is defined as follows,
\begin{definition} Membership Testing(MT) of given query gene string $Q$ in genome of species $G$ is defined as,
\begin{equation}
    \mathbf{MT}(Q, G) =1 \ \text{iff} \ \mathcal{S}(Q, k) \subseteq \mathcal{S}(G, k)
\end{equation}
\end{definition}

\begin{definition} The Multiple Set Membership Testing(MSMT) of a given query gene string $Q$ in a set of genomes of species $\{G_i\}_{i=1}^N$ is defined as,

\begin{equation}
\mathbf{MSMT}(Q, \{G_i\}_{i=1}^N) = \{ \mathbf{MT}(Q, G_i)\}_{i=1}^N 
\end{equation}
\end{definition}
MT or MSMT problems can be approximately solved using BF. In fact, BF and its variants are the state-of-the-art solutions to membership testing problems in gene search~\cite{bigSI, elworth2020petabytes}.

The FPR of the query $\mathcal{S}(Q, k)$ on the BF indexed with $\mathcal{S}(G, k)$ is given by 
\begin{equation}
    \epsilon_{MT} = \mathbf{Pr}(\ BF(Q)=1\ |\ \mathcal{S}(Q,k) \not \subset \mathcal{S}(G,k)\ )
\end{equation}


For multiple-set membership testing on $N$ genomes $\{G_i\}_{i=1}^N$,the FPR is defined as follows,

\begin{equation}
    \epsilon_{MSMT} = \frac{1}{N}\sum_i^N \mathbf{Pr}(\ BF_i(Q)=1\ |\ \mathcal{S}(Q,k) \not \subset \mathcal{S}(G_i,k)\ )
\end{equation}

In practice, the probability $Pr(.)$ is approximated by the empirical average over a large number of queries.



 \label{sec:background}
\section{Related Works}
\begin{figure*}
    \centering
    \includegraphics[scale=0.3]{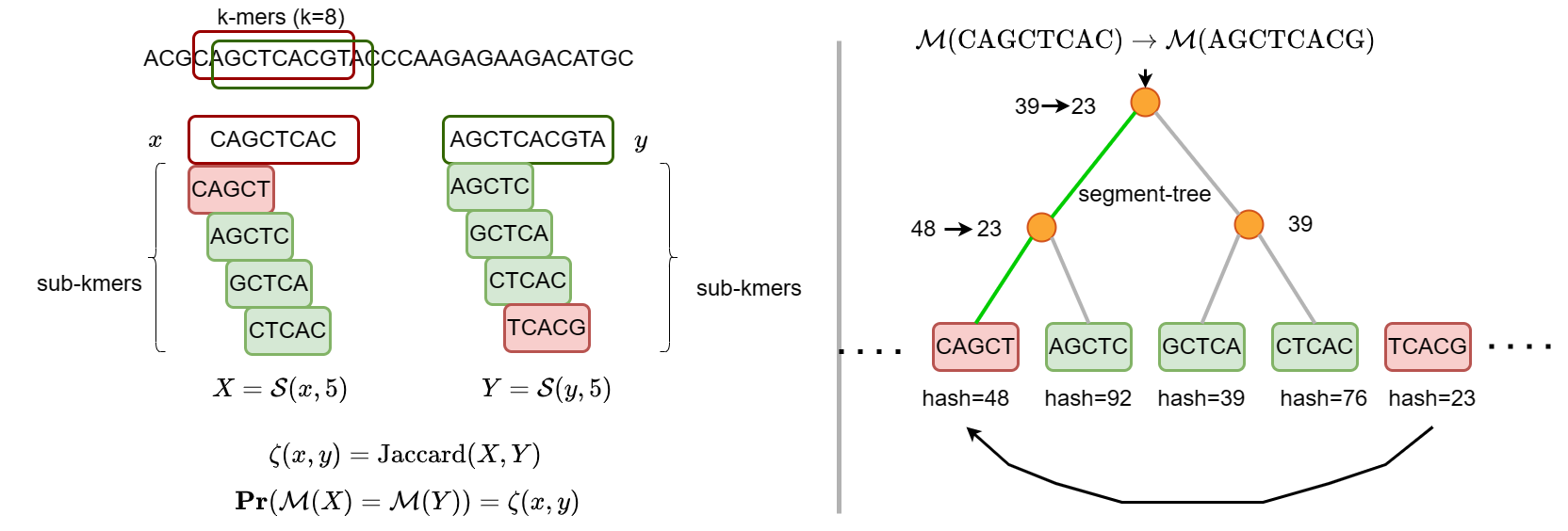}
    \caption{[Left: LSH computation on kmer] Each kmer (k=8 in this case) is split into sub-kmers of length t (t=5) and then Min-hash is applied to this set of sub-kmers. The probability of collision of two kmers is equal to the \textit{Jaccard similarity} between the two sets. [Right: Rolling min-hash] As two subsequent kmers only differ in one sub-kmer, we can reuse the hash computations from the previous kmer for the current kmer. We build a complete segment tree data-structure on the sub-kmer hashes for the first kmer. For the subsequent kmers we need to replace exactly one existing leaf-node corresponding to sub-kmer that is not present in the current kmer and replace it with the new incoming sub-kmer. Thus, for each kmer (after first one), we only need to compute one hash of sub-kmer and update $\textrm{log}(k-t)$ min values in the tree.}
     \vspace{-2.5mm}
     \label{fig:mhdetails}
\end{figure*}

\subsection{Locality preserving hashing}
The idea of our proposed IDL hash function family is closely related to a class of hash functions called \textit{Locality Preserving Hash} (LPH) \cite{indyk1997locality} and \textit{non-expansive} (NE) \cite{linial1996non} hash functions. NE hash functions are defined from a set $[U]$ to a set $[R]$  ($U >> R$) where $[m] = \{0, 1, ..., m-1\}$. The property of this hash function family is that any subset, say S,  of [U] of size $|S| < C\sqrt{R}$ can be mapped without any collisions with probability $\geq \frac{1}{2}$ and all of the functions of this family are non-expansive, i.e., $d(f(p), f(q)) \leq d(p,q)$. This was the first demonstration that hashing could be non-expansive, even with a small range of $R$. The idea was extended to multi-dimensional cube input spaces $\{-U, ... U\}^d$ in \cite{indyk1997locality}, and the hash functions were termed LPH. The underlying idea of LPH hash functions is still that of using NE hash functions. 

While the idea of NE and LPH hash functions are similar to the goal of IDL hash functions, NE and LPH hash functions are defined on integer domains and can be potentially extended to $R^d$ domains. However, it is unclear if we can extend such hash functions to string domains or, more generally, to sets. On the other hand, existing LSH functions are already non-expansive for points in input space which are very close with a high probability. As we will show, we can construct IDL hash function families that are locality-sensitive and identity-preserving using existing LSH functions. Thus, we can create an IDL hash function family for any domain for which LSH functions have been developed. Therefore IDL hash function families are widely applicable. 

\subsection{State-of-the-art in gene sequence search systems}
The gene sequence search problem has been formulated as a document retrieval task, wherein a complete genome sequence is treated as a document, and a DNA of any arbitrary length is queried to retrieve the document(s) containing it. Initially, BLAST \cite{altschul1990basic} was used to address this problem for small-scale data. However, with the exponential growth of gene archives \cite{bigSI,schatz2013dna,ENA2022}, database practitioners have shifted towards using inexpensive and scalable indexes that grow linearly with the number of files. The \textit{Sequence Bloom Trees} (SBTs) \cite{solomon2017improved,crainiceanu2013bloofi} and BIGSI/COBS \cite{bigSI,Bingmann2019COBSAC} were among the first approaches to use BFs, owing to their small size ($2.5$ to $7$ bits per key). While SBTs theoretically offer sub-linear query times, they are larger in size and slower than BIGSI \cite{bigSI} and its successor COBS (a linear query time index) on real datasets \cite{Bingmann2019COBSAC}.
BIGSI and COBS maintain one BF for each file. Consequently, the query time grows linearly in the number of files. 
Recently, a BF-based index, RAMBO \cite{gupta2021fast}, has overcome this limitation by randomly merging and repeating BFs, thereby providing sub-linear query time with linear index size.

All of these advanced searches for gene sequences are constructed using a BF that utilizes the RH functions. As we will see, the system performance of such search systems can be improved by simply replacing RH functions with IDL functions. 

\subsection{Cache-efficient Bloom filters}
\textit{Blocked Bloom filter} (BBF) \cite{putze2007cache} improved the BF by reducing the cache misses caused by the $\eta$ RH functions. It divides the BF into blocks of cache-line size in the BBF. As a result, all $\eta$ hash functions will hash to the same block for any given query. BBF can reduce up to $\eta-1$ additional cache misses at the expense of an increase in False positive rates. On the other hand, \namens-BF takes a different approach to reduce cache misses of subsequent queries by co-locating similar kmers. It can reduce cache misses by up to $O(k)$ times. Generally $k=31 >> \eta$. Clearly, BBF and IDL-BF are orthogonal approaches that can easily be integrated with each other.



 \label{sec:related}

\section{{\longname } family} \label{sec:family}
\label{IDLfamily}

In this section, we define the \textit{\longname}~(\name hash) family and show how to construct an instance of this family when the co-domain is a range of integers $[m] = \{0,\ldots,m-1\}$. 
\begin{definition}(\longname)
A family $\mathcal{I} = \{ \psi: U \rightarrow V\}$ is $(r_1, r_2, p_1, p_2)$ sensitive and $(L)$ preserving with respect to the metric spaces $(U,d_U)$ and $(V, d_V)$ if $r_1 < r_2$, $p_1 > p_2$, $L>0$ and for any $x,y \in U$, the following holds,
\begin{itemize}
    \item if $d_U(x, y) \leq r_1$, then 
    $$\mathbf{Pr}_{\psi \leftarrow \mathcal{I}} ((\psi(x) \neq \psi(y)) \wedge d_V(\psi(x), \psi(y)) < L) \geq p_1$$
    
    \item if $d_U(x, y) > r_2$, then $\mathbf{Pr}_{\psi \leftarrow \mathcal{I}} (d_V(\psi(x), \psi(y)) < L) \leq p_2$
\end{itemize}
\end{definition}


The \name function causes the close-by elements in the input metric space to lie inside a ball of diameter $L$ in the target metric space without colliding. In the context of using hashing for indexing or sketching, we want $V$ to be a range $[m] = \{0,...,m-1\}, m \in \mathbf{N}$ with $d_V(a,b)$ defined as $d_V(a,b) = |a-b|$.

Following is a generic way to construct an \name function from a given LSH function on a metric space $U$. Consider an LSH function family $\mathcal{L}$ defined on a metric space $(U, d_U)$. It is often the case that $V \neq [m]$. We give a construction for this case which is also applicable when $V=[m]$. Consider two random hash function families $\mathcal{R}_1: V \rightarrow [m]$ and  $\mathcal{R}_2: U \rightarrow [L]$. Then we define a hash function family $\mathcal{I} : \{\psi: U \rightarrow [m+L]\}$
\begin{equation}
\mathcal{I}: \{\psi(x) = \rho_1(\phi(x)) + \rho_2(x) | \; \phi \in \mathcal{L}, \rho_1 \in \mathcal{R}_1, \rho_2 \in \mathcal{R}_2 \}
\end{equation}
The $\mathcal{I}$ is an \name family, and we state this result in the theorem below, 
\begin{algorithm}[t]
\begin{algorithmic}
\STATE {\bf IDL-BF INSERT}
\STATE {\bf Input:} $G$: genome string, $\mathcal{M}: MinHash$, $\rho \in \mathcal{R}$: random function, $k$: kmer size, $t$: sub-kmer size, $\mathbb{B}$: IDL-BF of size $m$
\STATE $\{x_i\}_{i=1}^{|G|-k+1}= \mathcal{S}(G, k)$\\
\STATE Init  IDL-BF : $ \mathbb{B} = \mathbf{0}[m]$\\
\FOR{ $i \in 1..|G|-k+1$}
    \FOR{ $j \in 1..\eta$}
        \STATE $loc_j \leftarrow \psi(x_i, t, L) = \mathcal{M}(x_i, t)+ \rho(x_i)$ : seed=  $j$ \\
        \STATE $\mathbb{B}[loc_j] = \mathbf{1}$\\
    \ENDFOR
\ENDFOR
\end{algorithmic}
  \caption{Insert for IDL-BF}
  \label{alg:IDLBFInsert}
\end{algorithm}
\begin{theorem} (general \name construction) \label{thm:genconstruction}
Let $\phi$ be drawn from a $(r_1, r_2, p_1, p_2)$ sensitive LSH family say $\mathcal{L}$ and $\rho_1, \rho_2$ be drawn from a random hash family, say $\mathcal{R}_1 : V \rightarrow [m], \mathcal{R}_2: U \rightarrow [L]$ respectively. Then the family of hash functions defined by 
\begin{equation}
    \mathcal{I} : \{\psi(x) = \rho_1(\phi(x)) + \rho_2(x) \}
\end{equation}
is a $(r_1, r_2, \frac{L-1}{L}p_1, \frac{L}{m} + p_2)$ sensitive and $L$ preserving IDL family.
\end{theorem}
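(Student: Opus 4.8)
The plan is to verify the two defining clauses of the IDL family separately, exploiting that $\phi$, $\rho_1$, and $\rho_2$ are drawn independently and that $\rho_2$ takes values in the bounded range $[L] = \{0,\dots,L-1\}$. Throughout I take $x \neq y$ (for coincident keys the identity clause $\psi(x)\neq\psi(y)$ is vacuous), use $d_V(a,b) = |a-b|$ as stipulated for the integer codomain, and work from the decomposition $\psi(x) - \psi(y) = \big(\rho_1(\phi(x)) - \rho_1(\phi(y))\big) + \big(\rho_2(x) - \rho_2(y)\big)$.

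For the close case ($d_U(x,y) \le r_1$), the key observation is that an inner LSH collision delivers locality for free. If $\phi(x) = \phi(y)$ then $\rho_1(\phi(x)) = \rho_1(\phi(y))$, so $\psi(x) - \psi(y) = \rho_2(x) - \rho_2(y)$; since both terms lie in $\{0,\dots,L-1\}$ we automatically get $d_V(\psi(x),\psi(y)) \le L-1 < L$, and moreover $\psi(x) \neq \psi(y)$ holds exactly when $\rho_2(x) \neq \rho_2(y)$. Hence the target event contains $\{\phi(x)=\phi(y)\} \cap \{\rho_2(x)\neq\rho_2(y)\}$. Because $\phi$ and $\rho_2$ are independent, the LSH guarantee $\mathbf{Pr}(\phi(x)=\phi(y)) \ge p_1$ and the randomness bound $\mathbf{Pr}(\rho_2(x)\neq\rho_2(y)) = (L-1)/L$ multiply, giving the required lower bound $\tfrac{L-1}{L}p_1$.

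For the far case ($d_U(x,y) > r_2$) I would split on whether the inner value collides: $\mathbf{Pr}(d_V < L) = \mathbf{Pr}(d_V < L,\,\phi(x)=\phi(y)) + \mathbf{Pr}(d_V < L,\,\phi(x)\neq\phi(y))$. The first term is at most $\mathbf{Pr}(\phi(x)=\phi(y)) \le p_2$ by the LSH separation property (note $d_U(x,y) > r_2$ implies $d_U(x,y)\ge r_2$, so the bound applies). For the second term, conditioning on $\phi(x)\neq\phi(y)$ makes $\rho_1(\phi(x))$ and $\rho_1(\phi(y))$ two independent uniform draws from $[m]$; fixing $\rho_1(\phi(y)), \rho_2(x), \rho_2(y)$, the constraint $|\psi(x)-\psi(y)| < L$ confines the remaining uniform value $\rho_1(\phi(x))$ to an interval of $O(L)$ integers, which against the $m$ equally likely outcomes contributes the $\tfrac{L}{m}$ term. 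Summing the two pieces yields $\tfrac{L}{m} + p_2$.

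The main obstacle is the far case, on two fronts. First, one must rigorously justify that $\rho_1$ evaluated at the two distinct arguments $\phi(x)\neq\phi(y)$ returns independent uniform values; this is cleanest under a genuine random-hash assumption on $\mathcal{R}_1$ rather than mere $2$-universality. Second, the interval-counting step is where the stated constant is delicate: since $|\rho_2(x)-\rho_2(y)| \le L-1$ only shifts the window, a naive count of the integer points satisfying $|\psi(x)-\psi(y)| < L$ gives $2L-1$ points, so obtaining the clean $\tfrac{L}{m}$ rather than $\tfrac{2L-1}{m}$ requires either a tighter argument or absorbing the factor into the constant. The close case, by contrast, is routine once the cancellation $\rho_1(\phi(x)) = \rho_1(\phi(y))$ under an LSH collision is noticed.
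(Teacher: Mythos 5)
Your proposal follows the same route as the paper's own proof: both cases split on whether the inner LSH values collide; the close case uses exactly your containment of $\{\phi(x)=\phi(y)\}\cap\{\rho_2(x)\neq\rho_2(y)\}$ in the target event plus independence to get $\frac{L-1}{L}p_1$; and the far case bounds the collision branch by $p_2$ and the non-collision branch by an interval count over the values of $\rho_1(\phi(x))$.

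The one point where you hesitate is precisely where the paper errs, and your instinct is correct. The paper's proof asserts that, on the event $\phi(x)\neq\phi(y)$, the event $d_V(\psi(x),\psi(y))<L$ is contained in $\{|\rho_1(\phi(x))-\rho_1(\phi(y))|<L\}$, and then evaluates that probability as exactly $\frac{L}{m}$. Both steps are wrong for the reason you give: the shift $\rho_2(x)-\rho_2(y)$ can have magnitude up to $L-1$, so the containment only holds with window $2L-1$; and even for the event $\{|\rho_1(\phi(x))-\rho_1(\phi(y))|<L\}$ itself, two independent (or pairwise-independent) uniform values on $[m]$ land within distance $L$ of each other with probability about $\frac{2L-1}{m}$, not $\frac{L}{m}$. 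So the clean constant $\frac{L}{m}+p_2$ in the theorem statement is not recoverable by this argument; the honest bound is $p_2+\frac{2L-1}{m}$, i.e., the stated constant is off by roughly a factor of two. This is a constant-factor slack that does not change how the theorem is used (one can simply restate the family as $\bigl(r_1,r_2,\frac{L-1}{L}p_1,\frac{2L-1}{m}+p_2\bigr)$ sensitive), but it is a genuine flaw in the paper's proof rather than a mere delicacy, and your more careful count is the correct version. Your other caveat is also sound but mild: mere $2$-universality (a collision bound alone) does not justify the uniform-window computation for $\rho_1$, yet strong $2$-universality (pairwise independence), which standard constructions provide, does, since then $\bigl(\rho_1(u),\rho_1(v)\bigr)$ is uniform on $[m]^2$ for $u\neq v$.
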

\begin{proof}
The proof is deferred to additional materials.
\end{proof}
The construction first uses a given LSH function on $x$ to find a location $\rho_1(l(x))$ in $[m]$. It then applies a universal hash function to the key $x$ to distribute it locally near this location in the range $\{\rho_1(l(x)), \  \rho_1(l(x))+1, \  ... , \ \rho_1(l(x))+ L-1\}$

\section{Gene sequence search with \name hash functions}
\label{sec:genesearch}
We propose to change the RH family used in gene sequence search systems with the \name family of functions. The goal is to co-locate the bit of subsequent kmers so that we can effectively use the memory hierarchy while accessing the BF memory in these data-structures for sequential probing. Specifically, with high probability, the bit of the next kmer should be found in the same cache line (alt. page when accessing disk) as that of the current kmer avoiding additional access to RAM or disk. 

\subsection{\name function for gene sequence search}

First, we show how to construct an \name function for our purposes. We will use the general construction for \name functions as demonstrated in Theorem~\ref{thm:genconstruction}. We need to define an LSH function on the kmers, ensuring that subsequent kmers have a high probability of collision. The subsequent kmers have significant overlap in their representation, and this can be used to design a similarity function.

Consider two kmers $x, y$ of length $k$. We create set of sub-kmers, $X, Y$, of size $t$, where $X = \mathcal{S}(x, t)$ and $Y = \mathcal{S}(y, t)$.

For the similarity between the two kmers, $\zeta(x,y)$, we consider the \textit{Jaccard similarity} ($\mathcal{J}$) of the two sets of sub-kmers,
\begin{equation}
    \zeta(x, y) = \mathcal{J}(X, Y)
\end{equation}
The similarity metric accepts the LSH function of \textit{MinHash} ($\mathcal{M}$).

The LSH function $\phi(.)$ then, is
\begin{equation}
    \phi(x) = \mathcal{M}(\mathcal{S}(x, t))
\end{equation}
where $t$ is the sub-kmer size. An illustration of how a kmer is broken into a set of sub-kmers is presented in Figure \ref{fig:mhdetails}. The probability of collision in the case of using \textit{MinHash} is also equal to the $\zeta(x,y) = \mathcal{J}(X, Y)$

\textbf{Choice of $t$:} The choice of $t$ will determine how the \name function behaves for gene search. For instance, setting $t=k$ essentially causes \name to behave like an RH function. Thus, $t=k$ ignores similarity among kmers. Setting $t=1$ gives us the maximum similarity with subsequent kmers. However, it restricts the hash space of the \textit{MinHash} function as we only have four distinct base pairs. Thus, it is crucial that $t$ cannot be too large or too small. We find that $t=12$ to $t=20$ work well in practice for a kmer size $k=31$ which is the standard.
\begin{algorithm}[t]
\begin{algorithmic}
\STATE {\bf IDL-BF QUERY}
\STATE {\bf Input:} $Q$: query string, $\mathcal{M}: MinHash$, $\rho \in \mathcal{R}$: random hash, $k$: kmer size, $t$: sub-kmer size, $\mathbb{B}$: Indexed IDL-BF 
\STATE $\{q_i\}_{i=1}^{|Q|-k+1}= \mathcal{S}(Q, k)$\\
\FOR{ $i \in 1..|Q|-k+1$}
    \FOR{ $j \in 1..\eta$}
        \STATE Init:  $bit = 0$, Membership
        \STATE $loc_j \leftarrow \psi(q_i, t, L) = \mathcal{M}(q_i, t)+ \rho(q_i)$ : seed=$j$ \\
        \STATE $bit = bit \And \mathbb{B}[loc_j]$\\
        \IF{$bit == 0$}
            \STATE Membership = False\\
            \STATE EXIT\\
        \ENDIF
    \ENDFOR
\ENDFOR
\STATE Membership = True
 \end{algorithmic}
  \caption{Query from IDL-BF}
  \label{alg:IDLBFQuery}
\end{algorithm}

We choose the random hash functions $\rho_1(.)$ and $\rho_2(.)$ as suggested in the Theorem~\ref{thm:genconstruction}. Another hyperparameter for the function $\rho_2$ and the \name function itself is the locality parameter $L$.

\textbf{Choice of $L$:} A critical choice while choosing $\rho_2$ is that of $L$. It determines how close the subsequent kmers will be hashed. While we want the subsequent kmers to be close, we also want to discourage their collisions. In our ablation study, we find that using page size for $L$ works well both in terms of promoting locality and discouraging collisions on RAM. 

This completes the construction of an \name function in accordance with Theorem~\ref{thm:genconstruction}.

\subsection{SOTA gene search data structures with IDL}

In order to obtain the IDL versions of SOTA gene search data structures, we simply need to substitute the RH functions with the IDL function mentioned in the preceding section. In this paper, we consider three data structures for gene search. 

First is the vanilla BF which is described in Section~\ref{sec:background}. We call BF using IDL hash functions as IDL-BF.  Figure \ref{fig:ckbf} provides a comparative illustration of the operation of \namens-BF and BF. The complete pseudocode of insertion and querying with \namens-BF for gene search is described in Algorithms \ref{alg:IDLBFInsert} and \ref{alg:IDLBFQuery}. As with the standard BF, we must select the parameters of range, $m$, and a number of independent hash functions, $\eta$, with \namens-BF. The $m$ is largely determined by available resources, as with standard BF. In our experiments, we find that $\eta$, as suggested for standard BF, also works well with \namens-BF. 

Secondly, we consider \textit{Compact Bit-sliced Signature index} (COBS) \cite{Bingmann2019COBSAC}, which is essentially an array of BFs created for multiple genome files. 
To create the IDL version of COBS, we substitute the RH in each BF with the IDL hash function, resulting in IDL-COBS. The MSMT operation on COBS consists of independent MT on each BF. Therefore, naively replacing RH with \name and using a standard set of BF parameters in each BF leads to IDL-COBS.


Thirdly, we consider \textit{Repeated And Merged Bloom Filter} (RAMBO). Unlike COBS, RAMBO creates multiple BFs, with each BF representing a subset of randomly selected files. The parameters for RAMBO include the number of BFs in each repetition ($B$) and the number of repetitions ($R$). For $N$ files, $B$ is typically set to $O(\sqrt{N})$ and $R$ to $O(\log N)$. RAMBO's parameters are independent of the MT function, and therefore, the replacement of RH with \name in each BF of RAMBO is compatible with the standard RAMBO parameters. The modified RAMBO index obtained by using the \name hash function is referred to as IDL-RAMBO.


\subsection{Efficient generation of IDL hash for subsequent kmers}
A naive implementation of \name hash for gene search would increase the hashing costs as compared to RH functions used in BF. This can potentially outweigh the benefits obtained due to cache-locality.
In this section, we outline how to efficiently implement the \name function using \textit{MinHash}. 
\begin{algorithm}[t]
\begin{algorithmic}
\STATE {\bf Input:} Q :query or genome string, $k$: k-mer size, $t$: sub-k-mer size, $\rho \in \mathcal{R}:U \rightarrow [N]$. where U is the set of all k-mers and N is a large integer  \\

\STATE $\{x_i\}_{i=1}^{|Q|-k+1} = \mathcal{S}(Q, k)$  \ \ \ \ \#set of kmers  \\
\STATE $\{y_j\}_{j=1}^{k-t+1} = \mathcal{S}(x_0, t)$  \ \ \ \ \#set of sub-kmers \\
\STATE $\mathbf{T} \leftarrow \textrm{segment-tree}\left(\{\rho(y_j)\}_{j=1}^{k-t+1}\right)$\\
\STATE $idx \leftarrow 0$\\

\FOR{$x \in \{x_1, x_2, ... x_{|Q|-k+1}\}$ }
    \STATE $y_{in} = \mathcal{S}(x, t)[-1]$  \ \ \ \ \#incoming sub-kmer\\
    \STATE $y_{out} = T[idx]$  \ \ \ \ \#outgoing sub-kmer \\
    \STATE $\textrm{update-tree}(T, y_{out}, y_{in})$
    \STATE yield $T.\textrm{rootvalue}$\\
    \STATE $idx = (idx+1)\%(k-t+1)$\\
\ENDFOR
 \end{algorithmic}
  \caption{Rolling-min-hash over k-mers of gene string using segment tree when tree is complete ($(k{-}t{+}1)$ is power of 2.)}
  \label{alg:rollingmh}
\end{algorithm}

\subsubsection{\textit{MinHash} $(\mathcal{M})$ computation using random hash function}
Consider a set $A \subseteq[ N]$ where $N \in \mathbf{N}$. The exact \textit{MinHash} computation of $A$ is performed by using a permutation of $[N]$, say $\textrm{perm}$, and then computing
\begin{equation}
    \mathcal{M}(A) = \textrm{min}_{a \in A} (\textrm{perm}(a))
\end{equation}

However, $N$ is usually quite large, and computing and storing the permutation of $[N]$ is expensive. Instead, we use RH functions to compute \textit{MinHash} as follows,
\begin{equation}
    \mathcal{M}(A) = \textrm{min}_{a \in A} (\rho(a))
\end{equation}
where $\rho$ is drawn from an RH family. For RH functions, we use implementations such as \textit{MurmurHash}.

\subsubsection{Rolling \textit{MinHash} computation for subsequent kmers}
In order to see an overall improvement in indexing and query times, it is important to control the cost of hashing. If the \textit{MinHash} based \namens-BF is naively implemented, we would require computing $(k{-}t{+}1)$ random hash computations, $(k{-}t)$ comparisons for \textit{MinHash}, and one additional random hash with range $L$. This is expensive compared to a single hash computation for traditional BF. However, this can be reduced to just one additional random hash computation and $\log(k{-}t)$ comparisons.


We observe that the sequence of sub-kmers $\mathcal{S}(x, t)$ and $\mathcal{S}(y, t)$ for subsequent kmers $x$ and $y$,  have exactly one sub-kmer different, namely $\mathcal{S}(x)[0]$ (outgoing sub-kmer) and $\mathcal{S}(y)[-1]$ (incoming sub-kmer). Thus, we can reuse the hash computations for other sub-kmers. We keep these hash values in a segment tree data structure, with each internal node maintaining the min of the sub-tree. For subsequent kmer, an incoming sub-kmer replaces the outgoing sub-kmer in the tree, and the segment tree updates and ensures the minimum of all leaves is kept at the root.
This process is illustrated in Figure \ref{fig:mhdetails}. Thus for every kmer (except the first), we need to perform one hash and $log(t-k)$ comparisons for computing \textit{MinHash}. Refer to the rolling \textit{MinHash}'s Algorithm \ref{alg:rollingmh} for details.

\subsubsection{Densified one-permutation hashing}
With rolling \textit{MinHash} computation for an \namens-BF with $\eta$ independent hash functions and $m$ range, for each kmer, we need to compute 2$\eta$ hash values (one for \textit{MinHash} and one for local hashing in $[L]$). In a standard BF with RH functions, we need to compute $\eta$ hashes. We can further reduce the cost of hashing by using \textit{densified one-permutation hashing} \cite{shrivastava2014densifying} for \textit{MinHash} computations. This enables us to compute multiple \textit{MinHash} at the cost of single hash computation. Thus, we need to only compute $(\eta + 1)$ hash values for every kmer (except for the first kmer, which is used to populate the segment tree in rolling \textit{MinHash} computation.).


\section{Analysis} \label{sec:analysis}
This section provides a generic analysis of \namens-BF which is at the core of efficent SOTA gene search data structures such as COBS and RAMBO. We make some assumptions about the data inserted into the BF and the relation between this data and the query. These assumptions are motivated by applications such as gene search.
\begin{table}[t]
\centering
\caption{Evaluation of assumption 1 on gene-search data. For $w_1{=}k{=}31$, we estimate the probability of faraway tokens having 0 \textit{Jaccard similarity}. As we can the probability for this is very close to 1.0 across different sized genome files. }
\label{tab:assumption1}
\resizebox{\linewidth}{!}{
\begin{tabular}{|l|c|c|}
\hline
\textbf{File name}       & \textbf{Size}  & $\mathbf{Pr}(\mathbf{1}(\mathcal{J}(\mathcal{S}(x_i, t), \mathcal{S}(x_j, t)) = 0)), |i-j| \geq k$ \\ \hline
ERR105106  & 384MB & 0.99944                                                                                            \\ \hline
ERR1102508 & 680MB & 0.99999                                                                                            \\ \hline
SRR1816544 & 1.3GB & 0.99999                                                                                            \\ \hline
SRR1955610 & 3.8GB & 1.0                                                                                                \\ \hline
\end{tabular}}
\vspace{-2mm}
\end{table}
\begin{figure*}
  \centering
  \includegraphics[width=\textwidth]{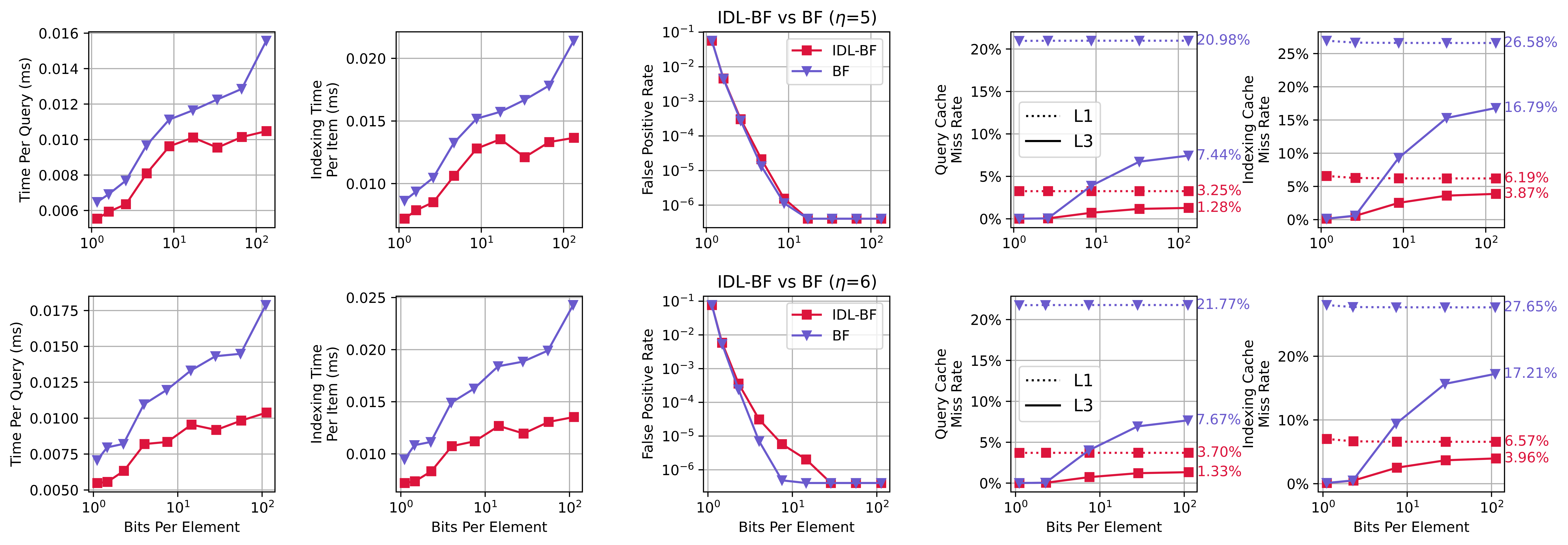}
  \caption{The impact of different sizes of \namens-BF vs BF on query time, indexing time, FPR, and cache miss rates, averaged over 5 runs. Query Time and indexing time of \namens-BF grow significantly slower when compared with BF, while achieving similar FPR. For the same BF size, \namens-BF achieves up to 41.9\% and 44.3\% reduction in query time and indexing time, respectively. The reductions in query and indexing time are accounted for by the reductions in cache miss rate achieved by \namens-BF. For the same BF size, \namens-BF achieves up to 76.2\% and 77.0\% reduction in L1 and L3 cache miss rate during querying, and up to 83.0\% and 82.6\% reduction in L1 and L3 cache miss rate during indexing, respectively.}
    \vspace{-2.5mm}
  \label{fig:single_bf_and_abf_size}
\end{figure*}
Consider a standard partitioned BF with total range $m$, $\eta$ independent hash repetitions in separate range (i.e., the range of each hash is $m' = m /\eta$). Consider an \name function to be used with BF constructed using Theorem~\ref{thm:genconstruction}. For a single repetition, let the LSH function be $\phi$, and two RH functions be $\rho_1$ with range $[m']$ and $\rho_2$ with range $[L]$. Let the probability of collision between two tokens under $\phi$ be denoted by $\zeta(x,y)$. Consider a sequence of tokens $\{x_i\}_{i=1}^n$ be inserted into the \namens-BF and let the query token be $q$.
We make the following assumptions,
\begin{itemize}[nosep, leftmargin=*]
    \item \textbf{Assumption 1:} $\zeta(x_i, x_j) = 0$ if $|i - j |\geq w_1$ for some $w_1 \in \mathbf{N}$
    
    \item \textbf{Assumption 2:} $|\{x_i | \zeta(q, x_i) > 0 \}| \leq w_2$
\end{itemize}
The first assumption states that the design of the LSH function on the sequence is such that only nearby tokens are similar and faraway tokens are completely dissimilar. The second assumption states that, at a time, the query token can only have similarities with a few tokens of the data. Under these conditions, the false positive rate of the \namens-BF can be bounded, and the result is presented in the theorem below,
\begin{theorem} \label{thm:main}
Consider a standard partition BF with total range $m$ is constructed on sequence of tokens $\{x_i\}_{i=1}^n$ using $\eta$ independent \name functions constructed as per Theorem~\ref{thm:genconstruction} with LSH function $\phi$ and two RH functions $\rho_1$ with range $[\frac{m}{\eta}]$ and $\rho_2$ with range $[L]$. The false positive rate $\epsilon$ for a query token $q$ under assumptions 1 and 2 is bounded as follows,
\begin{align*}
\epsilon  \leq \left( w_2\left(\frac{1}{L} + \frac{\eta}{m} \right)   + 2 \left( 1 -  \left( 1 - \left( \frac{w_1\eta}{m} \right) \right)^{\frac{n}{2w_1}} \right) \right)^\eta\\
\approx \left( w_2\left(\frac{1}{L} + \frac{\eta}{m} \right)   + 2 \left( 1 -  e^{-\frac{\eta n}{2m}}  \right) \right)^\eta
\end{align*}
\end{theorem}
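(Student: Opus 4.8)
The plan is to decompose the false positive event across the $\eta$ independent partitions and reduce everything to bounding the occupancy probability of a single query bit, then to split that probability into a contribution from tokens \emph{similar} to $q$ (controlled by Assumption 2) and a contribution from tokens \emph{dissimilar} to $q$ (controlled by Assumption 1 through a blocking argument). Since a partitioned BF reports a positive only when the bit $\psi_j(q)$ is set in every one of the $\eta$ disjoint sub-ranges, and the $\eta$ hash functions are drawn independently, I would first write $\epsilon = \prod_{j=1}^{\eta} \Pr[\text{bit } \psi_j(q) \text{ is set}]$, so that it suffices to bound the single-partition occupancy probability and raise it to the $\eta$-th power.

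As a preliminary step I would compute the per-pair collision probability for one partition with $m' = m/\eta$ by conditioning on whether the LSH function collides. If $\phi(a)=\phi(b)$, an event of probability $\zeta(a,b)$, the anchors $\rho_1(\phi(\cdot))$ agree and a full collision requires $\rho_2(a)=\rho_2(b)$, probability $1/L$; otherwise the two anchors are independent and agreement has probability $1/m' = \eta/m$. This gives $\Pr[\psi(a)=\psi(b)] = \zeta(a,b)/L + (1-\zeta(a,b))\,\eta/m \le 1/L + \eta/m$. Applying a union bound over the at most $w_2$ tokens with $\zeta(q,x_i)>0$ (Assumption 2) then produces the first term $w_2\bigl(1/L + \eta/m\bigr)$.

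The crux, and the main obstacle, is the dissimilar tokens, for which a naive union bound over all of them yields the far-too-loose $n\eta/m$. Here I would exploit Assumption 1: tokens at index distance $\ge w_1$ have $\zeta=0$, hence never collide under $\phi$, so their anchors $\rho_1(\phi(\cdot))$ are mutually independent. I would partition the index sequence into consecutive windows of length $w_1$, and split these windows into the even-indexed and odd-indexed families, so that within each family any two windows are at index distance $\ge w_1$ and therefore behave independently; this separation is exactly what yields the leading factor $2$ and the exponent $n/(2w_1)$. After conditioning on $\psi(q)=c$ to decouple the shared query dependence, a union bound over the $w_1$ dissimilar tokens of one window shows it misses $c$ with probability $\ge 1 - w_1\eta/m$, and independence across a family gives $(1 - w_1\eta/m)^{n/(2w_1)}$; a final union bound over the two families produces the second term $2\bigl(1 - (1 - w_1\eta/m)^{n/(2w_1)}\bigr)$.

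Finally I would add the two contributions to bound the single-partition occupancy probability, raise the sum to the $\eta$-th power using partition independence, and pass to the exponential form via $(1 - w_1\eta/m)^{n/(2w_1)} \approx e^{-\eta n/(2m)}$. The delicate points to verify are that conditioning on $\psi(q)=c$ preserves the per-token estimate $\Pr[\psi(x_i)=c \mid \psi(q)=c] \le \eta/m$ for dissimilar tokens (it does, since their anchors are independent of $q$'s), and that the window bookkeeping is correct, namely that windows differing by at least two indices really are separated by distance $\ge w_1$ so that the even and odd families are genuinely independent.
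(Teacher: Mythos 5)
Your proposal is correct and follows essentially the same route as the paper's proof: the same split into the at most $w_2$ similar tokens (union bound with per-pair collision probability $\zeta/L + (1-\zeta)\eta/m \le 1/L + \eta/m$) and the dissimilar tokens (partitioning the sequence into windows of length $w_1$, separating even- and odd-indexed windows to obtain independence, which is exactly what produces the factor $2$ and the exponent $n/(2w_1)$), followed by raising the single-partition bound to the $\eta$-th power. Your explicit conditioning on $\psi(q)=c$ is just a more careful justification of the independence claim that the paper states as an unproved lemma, so the two arguments coincide in substance.
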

We make the following observations from the theorem.
For a given value of $\eta$, even if $m \rightarrow \infty$, the RHS is bounded by $(w_2/L)^\eta$. This is expected since $L$ controls the probability of collision among highly similar tokens. Also, for a given range $m$, keeping all other parameters constant, one can optimize $\eta$ for the lowest bound. While closed form solution to the minimization problem is difficult, a simple grid search would give us the best $\eta$. As this is the upper bound, this does not directly give us a way to choose optimal $\eta$ for our purpose. The bound is loose. In practice, we obtain much lower false positive rates, as we will see in Section~\ref{sec:exp}.

\textbf{FPR for gene-search:}
Let us check if assumptions 1 and 2 made in the previous subsection hold for genomic data. 
Table~\ref{tab:assumption1} verifies the first assumption for gene-search data. With high probability, the \textit{Jaccard similarity} on sub-kmer sets of faraway k-mers is 0. Thus, assumption 1 is valid for gene search. Under the validity of assumption 1 and the specific usage of \textit{MinHash} on sub-kmers for the LSH function involved in \namens, we can prove assumption 2 
\begin{lemma}
In gene search, while using BF with k-mer size $k$ and sub-kmer size $t$, given assumption 1 is valid with $w_1=k$, the assumption 2 is true with $w_2 = (k-t+1)^2$
\end{lemma}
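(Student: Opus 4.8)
The plan is to translate the hypothesis $\zeta(q,x_i) > 0$ into a purely combinatorial statement about shared substrings and then count. Since the LSH similarity here is the Jaccard similarity of sub-kmer sets, $\zeta(q, x_i) = \mathcal{J}(\mathcal{S}(q,t), \mathcal{S}(x_i,t)) > 0$ holds if and only if the two sub-kmer sets intersect, i.e. $q$ and $x_i$ share at least one length-$t$ sub-kmer. Hence the quantity to bound is
\[
\left| \{ i : \zeta(q,x_i) > 0 \} \right| = \left| \bigcup_{s \in \mathcal{S}(q,t)} I_s \right|, \qquad I_s := \{ i : s \in \mathcal{S}(x_i, t) \}.
\]
Because $q$ has length $k$ and is sliced with stride one, $\mathcal{S}(q,t)$ contains exactly $k-t+1$ sub-kmers, so a union bound gives $|\{i : \zeta(q,x_i)>0\}| \le \sum_{s \in \mathcal{S}(q,t)} |I_s| \le (k-t+1)\, \max_s |I_s|$. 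It therefore suffices to show $|I_s| \le k-t+1$ for every fixed $s$; multiplying by the $k-t+1$ sub-kmers of $q$ yields the claimed $w_2 = (k-t+1)^2$.

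To bound $|I_s|$ I would first establish the ``single occurrence'' structure forced by Assumption~1. Any two genome kmers $x_i, x_j$ that both contain $s$ share that sub-kmer, so $\zeta(x_i,x_j) > 0$; the contrapositive of Assumption~1 (with $w_1 = k$) then gives $|i-j| < k$, so all indices in $I_s$ lie in one window of fewer than $k$ consecutive positions. Pushing this further, comparing the earliest and latest kmers that can witness occurrences of $s$ shows the genome positions at which $s$ appears are themselves confined to a window of length less than $t$, so they behave as a single local occurrence. A single occurrence of a length-$t$ string at genome position $p$ is contained in precisely the kmers $x_i$ with $p-(k-t) \le i \le p$, that is in exactly $k-t+1$ consecutive kmers, giving $|I_s| \le k-t+1$ in the generic repeat-free case and hence the result after the final multiplication.

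The main obstacle is exactly this per-sub-kmer count: turning ``faraway kmers are dissimilar'' (a statement about length-$k$ tokens) into control on the much shorter length-$t$ sub-kmers. Assumption~1 directly localizes the kmers sharing $s$ to a window of size less than $k$, which by itself only yields $|I_s| \le k$; sharpening this to $k-t+1$ requires arguing that within that window $s$ cannot be spread over more than a single occurrence block, and dealing with the degenerate case of short internal repeats (e.g.\ homopolymer-like runs) where $s$ recurs at nearly consecutive offsets without violating Assumption~1. I would either fold these into the single-occurrence reduction above, or, if full rigor is wanted, observe that Assumption~1 already delivers $|I_s| \le k$ and that $(k-t+1)^2$ is invoked downstream only as a loose upper bound on $w_2$ feeding the false-positive-rate estimate of Theorem~\ref{thm:main}. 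Once the per-sub-kmer bound is secured, the concluding multiplication over the $k-t+1$ sub-kmers of $q$ is immediate.
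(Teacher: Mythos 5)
Your proposal takes essentially the same route as the paper: the paper's entire proof is the two-factor count you set up, namely that a query kmer has at most $k-t+1$ sub-kmers, that each sub-kmer can lie in at most $k-t+1$ data kmers, and that the product $(k-t+1)^2$ therefore bounds the number of data kmers with non-zero \emph{Jaccard similarity} to the query.

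The difference is that the paper simply \emph{asserts} the per-sub-kmer count of $k-t+1$, whereas you try to derive it from Assumption~1 --- and the obstacle you flag in doing so is real. Assumption~1 with $w_1=k$ only confines the indices in $I_s$ to a window of $k$ consecutive positions, i.e.\ $|I_s|\leq k$; the sharper bound $k-t+1$ holds exactly when $s$ occurs at a single genome position. Your intermediate observation that all occurrences of $s$ lie within a span of less than $t$ does not by itself collapse them to ``a single local occurrence'': a homopolymer-type run of length up to $2t-1$ makes the sub-kmer $A^t$ occur at up to $t$ consecutive positions, which yields $|I_s|=k$ while leaving Assumption~1 intact for a generic surrounding sequence. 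So the constant $(k-t+1)^2$ is not derivable from Assumption~1 alone; only $k(k-t+1)$ is. This is a gap in the paper's one-line proof as much as in the statement of the lemma, not a defect introduced by your argument: you isolate it correctly, and your fallback $w_2=k(k-t+1)$ is equally serviceable downstream, since $w_2$ enters the bound of Theorem~\ref{thm:main} only as a multiplicative factor in one term. In short, your proposal matches the paper's approach and is, if anything, more careful than the proof the paper actually gives.
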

The proof is deferred to the appendix. Thus both the assumptions hold for gene sequence search, and the Theorem~\ref{thm:main} is applicable to gene search with $w_1 = k$ and $w_2 = (k - t + 1)^2$
\section{Experiments} \label{sec:exp}
 In this section, we present extensive experimental results to validate the effectiveness of \name for the application of gene sequence search. We integrate IDL into BF and existing gene sequence indices- COBS and RAMBO. Subsequently, we compare their efficiency and search quality with the baselines which uses RH functions. For COBS and RAMBO, we consider both scenarios of data residing on RAM and on disk, to reflect the real-world use cases.


\begin{figure*}
  \centering
  \includegraphics[width=\textwidth]{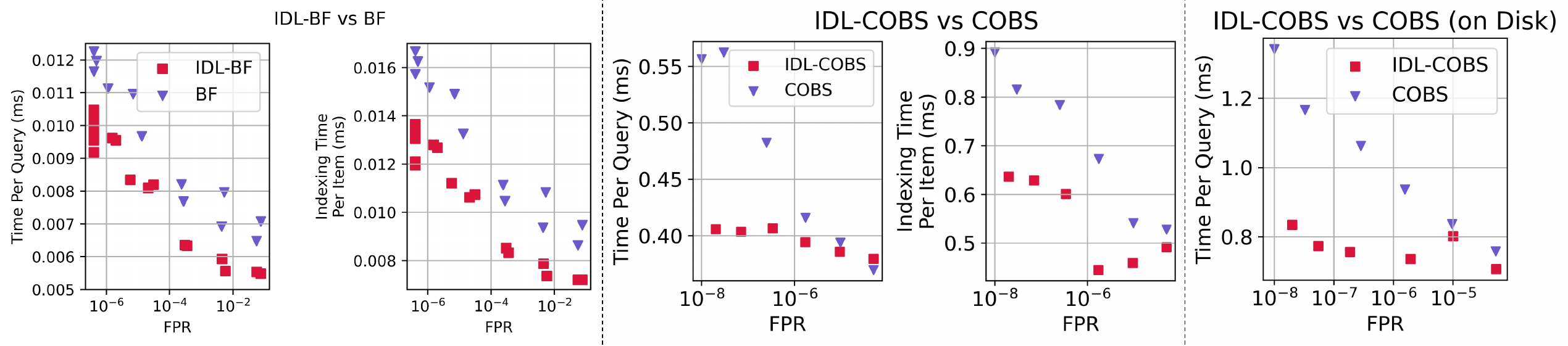}
  \caption{ Scatter plots for various configurations of \namens-BF/\namens-COBS and BF/COBS to compare index time and query time against false positive rate achieved. The best points ( lowest for given FPR) should be considered for comparison. [Left 2: \namens-BF vs BF] \namens-BF reduces query time by up to $18.1\%$ and indexing time by up to $23.0\%$ while achieving the same or better FPR as the baseline. [Middle 2: \namens-COBS vs COBS] \namens-COBS reduces query time by up to $27.7\%$ and indexing time by up to $33.8\%$ while achieving the same or better FPR the baseline. [Right 1: \namens-COBS vs COBS (on disk)] \namens-COBS on disk reduces query time by up to $28.4\%$ while achieving the same or better FPR as the baseline.}
    \vspace{-2.5mm}
  \label{fig:all_vs_fpr}
\end{figure*}

\textbf{Dataset:} We use randomly selected FASTQ\cite{FASTQ} sequence files microbial genome database
\cite{bigSI}. 
Data is downloaded using the SRAToolkit\cite{SRAToolkit} provided by NCBI. These files are all included in the European Nucleotide Archive (ENA) \cite{ENA2022}. For single-set MT, we use a randomly sampled file from the database (1.1GB, 420M kmers), while for MSMT, we use 10 randomly sampled files (27.9GB, 10.3B total kmers) for COBS and 100 files (105GB, 42B total kmers) for RAMBO. The database of gene sequences does not provide a query set of sequences. Therefore, we generate the queries with a 1-poisoning attack: for each sequence in a file, we randomly sample a subsequence of length greater than $31$ and poison the subsequence by changing one character at a random location. This process generates difficult queries for set membership testing since each query will maximally resemble inserted sequences.

\textbf{Hardware and software implementation:}
All experiments are performed on a Ubuntu 20.04 machine equipped with 2 AMD EPYC 7742 64-core processors (the L1, L2, L3 cache sizes are 2MB, 32MB, and 256MB, respectively) and 1.48TB of RAM. The code is written in C++ and compiled using the GNU Compiler with \texttt{-O3} optimization, and runs in a single thread.

\textbf{Baseline methods:} For \namens-based data structures for gene sequence search, namely \namens-BF, \namens-COBS, and \namens-RAMBO, we use the RH function based counterparts as the baseline methods. We use the general purpose \textit{MurmurHash}\cite{murmur} as the choice of random hash function. The efficiency and hashing quality of \textit{MurmurHash} make it a strong baseline.

\textbf{Evaluation metrics:} We use three evaluation metrics: \textbf{Index Time}: is the insertion time per sequence read ($~200$ kmers) averaged over all sequence reads in the database. \textbf{Query Time:} is the average query time . All query times are reported on single thread. \textbf{False positive rates (FPR):} We use FPR for MT and MSMT as defined in Section \ref{subsec:genesearch}. \textbf{Cache Miss Rates:} is the proportion of memory accesses that result in a cache miss. We report the number of cache misses simulated via Valgrind tool \cite{Valgrind} using two level model with level 1 and level 2 capacities set to 2MB (L1) and 256 MB (L3) respectively.

\subsection{Results for \namens-BF}
Figure \ref{fig:single_bf_and_abf_size} and \ref{fig:all_vs_fpr} [Left] show the experimental results of \namens-BF vs. BF. In Figure \ref{fig:single_bf_and_abf_size}, we increase the size of BF to investigate its effects on the evaluation metrics. We make the following observations.
\begin{itemize}[leftmargin=*]
    \item \textbf{Low FPR:} As the size of BF grows, the FPR is expected to decrease because distinct items are less likely to be hashed to the same location, leading to fewer hash collisions. We observe that the decrease in FPR is similar between \namens-BF and BF, which implies \name has the desirable property of having similarly low collision rates like \textit{MurmurHash}. \namens-BF has a slightly higher FPR than vanilla BF, which is expected because \name preserves locality. This slightly worse FPR is compensated by the fact that \namens-based BF achieves significant reductions in both query time and indexing time and hence can be run at higher values of BF sizes to achieve the same FPR at latency advantage (see Figure \ref{fig:all_vs_fpr}). 
    \item \textbf{Reductions in query and indexing time:} When compared to BF, \namens-BF achieves up to 41.9\% and 44.3\% reduction in query time and indexing time, respectively. Moreover, we note that the growth rates of query time and indexing time of \namens-BF are significantly slower than BF. This is accounted for by the much lower L1 cache miss rates of \namens-BF and its much lower growth rate of L3 cache miss rates.
    \item \textbf{Significant reductions in cache miss rates:} L1 cache miss rates of \namens-BF are consistently much lower than those of BF, while L3 cache miss rates of \namens-BF grow significantly slower than those of BF as the size of BF increases. For the same BF size, \namens-BF achieves up to 76.2\% and 77.0\% reduction in L1 and L3 cache miss rate during querying, and up to 83.0\% and 82.6\% reduction in L1 and L3 cache miss rate during indexing, respectively.
\end{itemize}

The gains of \namens-BF in the evaluation metrics are clear with varying BF sizes. But what are the absolute gains in query and indexing efficiency of \namens-BF while achieving the same or better search quality as baseline? The left two plots in Figure \ref{fig:all_vs_fpr} show the querying and indexing time reductions of \namens-BF over BF while achieving the same or better FPR. \namens-BF consistently achieves lower querying and indexing time over BF for the same FPR. Given the same FPR, \namens-BF reduces query time by up to 18.1\% and indexing time by up to 23.0\%, respectively.

\subsection{Results for \namens-COBS}
\begin{figure*}[t]

  \includegraphics[width=\textwidth]{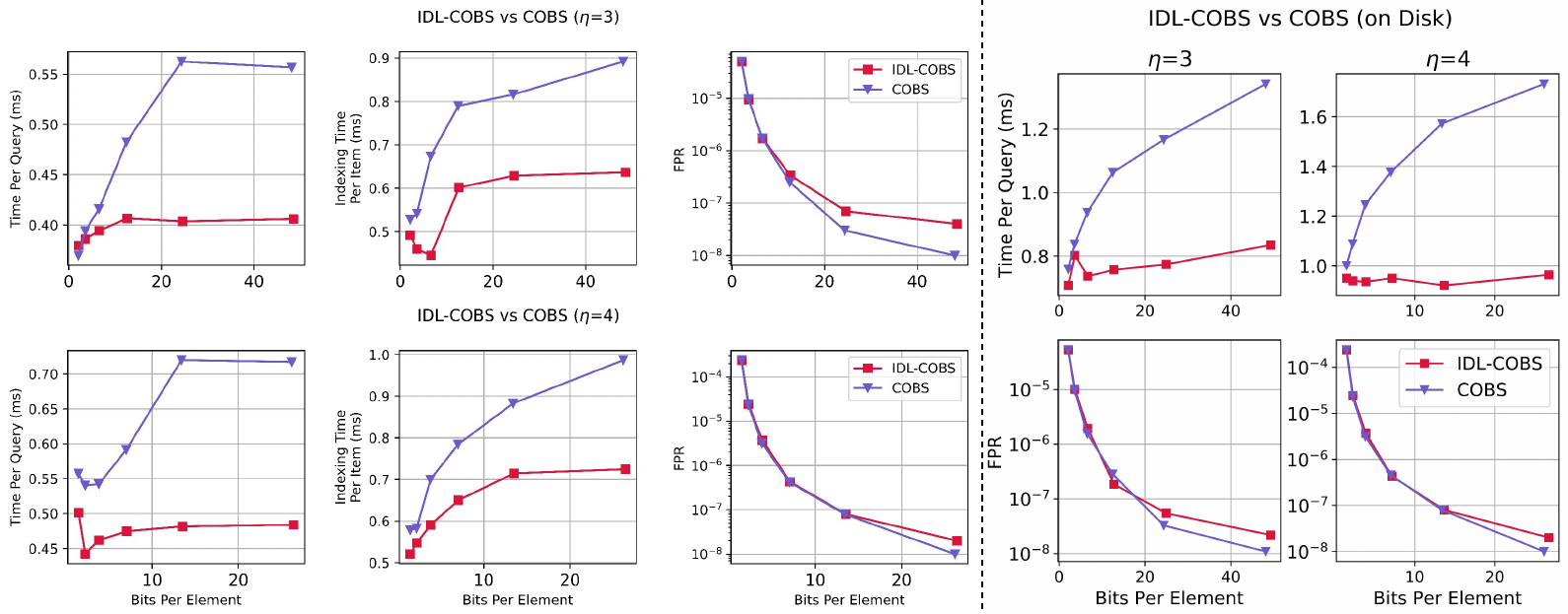}
  \caption{[Left: \namens-COBS vs COBS] The impact of different sizes of \namens-COBS vs COBS on the evaluation metrics. For the same size, \namens-COBS achieves up to 33.1\% and 28.6\% reduction in query time and indexing time, respectively. [Right: \namens-COBS vs COBS on Disk] The impact of different total sizes of \namens-COBS vs COBS on query time and indexing time, when using disk. The growth of query time for \namens-COBS is almost flat thanks to its cache efficient design, while COBS grows much faster. \namens-COBS on disk achieves up to $44.32\%$ reduction in query time than baseline for the same total COBS size.}
\label{fig:cobs}
\vspace{-2.5mm}
\end{figure*}

We perform MSMT for gene search by indexing 10 FASTQ files of size 27.9GB and 10.3 billion kmers, with the COBS index. 
In Figure \ref{fig:cobs}, we increase the total size of all BFs in COBS to investigate its effects on the evaluation metrics. The results for data residing on RAM are shown on the left of Figure \ref{fig:cobs}, and the results for data residing on disk are shown on the right. Indexing time is omitted from disk usage since indexing is only done on RAM in real use cases. For the small BF ranges, \namens-COBS and COBS have similar query time and indexing time. However, as the total size grows, the query time of COBS quickly outgrows \namens-COBS, and the query time of \namens-COBS remains relatively flat. At the largest size considered, \namens-COBS on RAM achieves up to 33.1\% and 28.6\% reduction in query time and indexing time, and \namens-COBS on disk achieves up to 44.32\% reduction in query time, while maintaining similar FPR. The right two plots in Figure \ref{fig:all_vs_fpr} shows the direct comparison of query time and indexing time between COBS and \namens-COBS for the same FPR. \namens-COBS is able to outperform COBS in terms of query and indexing efficiency in almost all regimes of FPR. \namens-COBS on RAM reduces query time and indexing time by up to 27.7\% and 33.8\%, respectively, and \namens-COBS on disk reduces query time by up to 28.4\% for the same FPR.

\begin{table}[]
\centering
\caption{Query and index time improvements on the SOTA gene sequence search algorithm RAMBO \cite{gupta2021fast}. 
RAMBO with IDL hash is called IDL-RAMBO. We experiment with IDL random hash range (L) of 2k (2048) and 4k (4096).  }
\label{tab:ramboNumbers}
\fontsize{8}{10}\selectfont
\setlength{\tabcolsep}{3.5pt}
\begin{tabular}{|c|c|c|c|c|c|}
\hline
 & { \textbf{BF size ($m$)}}  & { \textbf{500M}} & { \textbf{1000M}} & { \textbf{1500M}} & { \textbf{2000M}} \\
 \hline
& \textbf{(Index size)} & { 2.4 GB}   & { 4.7 GB}    & { 7 GB}    & { 9.4 GB}   \\ 
 \hline

\textbf{FPR}  & { {RAMBO}} & { 4.9E-2}& { 2.6E-3}  & { 6.3E-4}& { 2.6E-4}\\ 
  & { {IDL-RAMBO (2k)}}    & { 4.5E-2}& { 2.2E-3} & { 6.0E-4}& { 3.5E-4}\\
   & { {IDL-RAMBO (4k)}}    & { 4.6E-2}& { 2.36E-3} & { 6.0E-4}& { 3.0E-4}\\
\hline
 \textbf{Query}  & { {RAMBO}}  & { 0.138}    & { 0.11}  & { 0.117} & { 0.15}   \\
  \textbf{time (ms)}  & { {IDL-RAMBO (2k)}}     & { 0.10}    & { 0.08}  & { 0.08} & { 0.083}  \\
  \textbf{(Disk)} & { {IDL-RAMBO (4k)}}    & { 0.098}    & { 0.06}  & { 0.068}  & { 0.081}  \\
\hline
\textbf{Query} & { {RAMBO}} & { 0.101}& { 0.07} & { 0.073}   & { 0.092}  \\
 \textbf{time (ms)} & { {IDL-RAMBO (2k)}}    & { 0.07}& { 0.041} & { 0.038}  & { 0.039}  \\
 \textbf{(RAM)} & { {IDL-RAMBO (4k)}}    & { 0.078}& { 0.047}  & { 0.041}   & { 0.042}   \\
\hline
 \textbf{Index}   & { {RAMBO}}   & { 6.31}   & { 8.18}    & { 9.11}    & { 7.93}    \\
\textbf{time}  & { {IDL-RAMBO (2k)}}      & { 5.44}   & { 6.5}     & { 6.79}    & { 6.68}    \\
\textbf{(min)} & { {IDL-RAMBO (4k)}}     & { 4.61}   & { 5.12}    & { 5.12}    & { 5.645}   \\ 
\hline

\end{tabular}
\vspace{-2.5mm}
\end{table}

With the low FPR and significant reductions in query and indexing time combined, our proposed algorithms translate to significant efficiency gains in real retrieval applications while maintaining search quality. In applications such as gene search, terabytes of data need to be indexed and queried efficiently, \name hash functions will have a tremendous impact on the efficiency of gene search.

\subsection{\namens-RAMBO for large-scale gene search}
 The RAMBO \cite{gupta2021fast} index is designed for large scale data. Hence, we take a step further to index 100 files on RAMBO index. For this experiment, we use number of repetitions R=2 and number of BF in each repetitions B to be 20.  With this, RAMBO has a total of 40 BFs for 100 files. The total size of the data to index is 105GB, and the total number of basepairs are approximately 49 Billion (42 Billion kmers). Table \ref{tab:ramboNumbers}, shows the index size, FPR, query time with index on disk and RAM, and total index time with varying BF size on RAMBO and the proposed \namens-RAMBO index. We vary the BF sizes from 500 Million to 2000 Million bits and the number of hash $\eta$ is 4. The query is performed over single thread and the indexing is done with OpenMP \cite{dagum1998openmp} shared memory parallelism on 64 threads. The \name hash plugin over RAMBO provides the similar FPRs with up to \textbf{$2.2 \times$} query speedup and \textbf{$1.7\times$} index time speedup. 

\subsection{Ablation study}
\begin{figure}[t]
  \includegraphics[width=\linewidth]{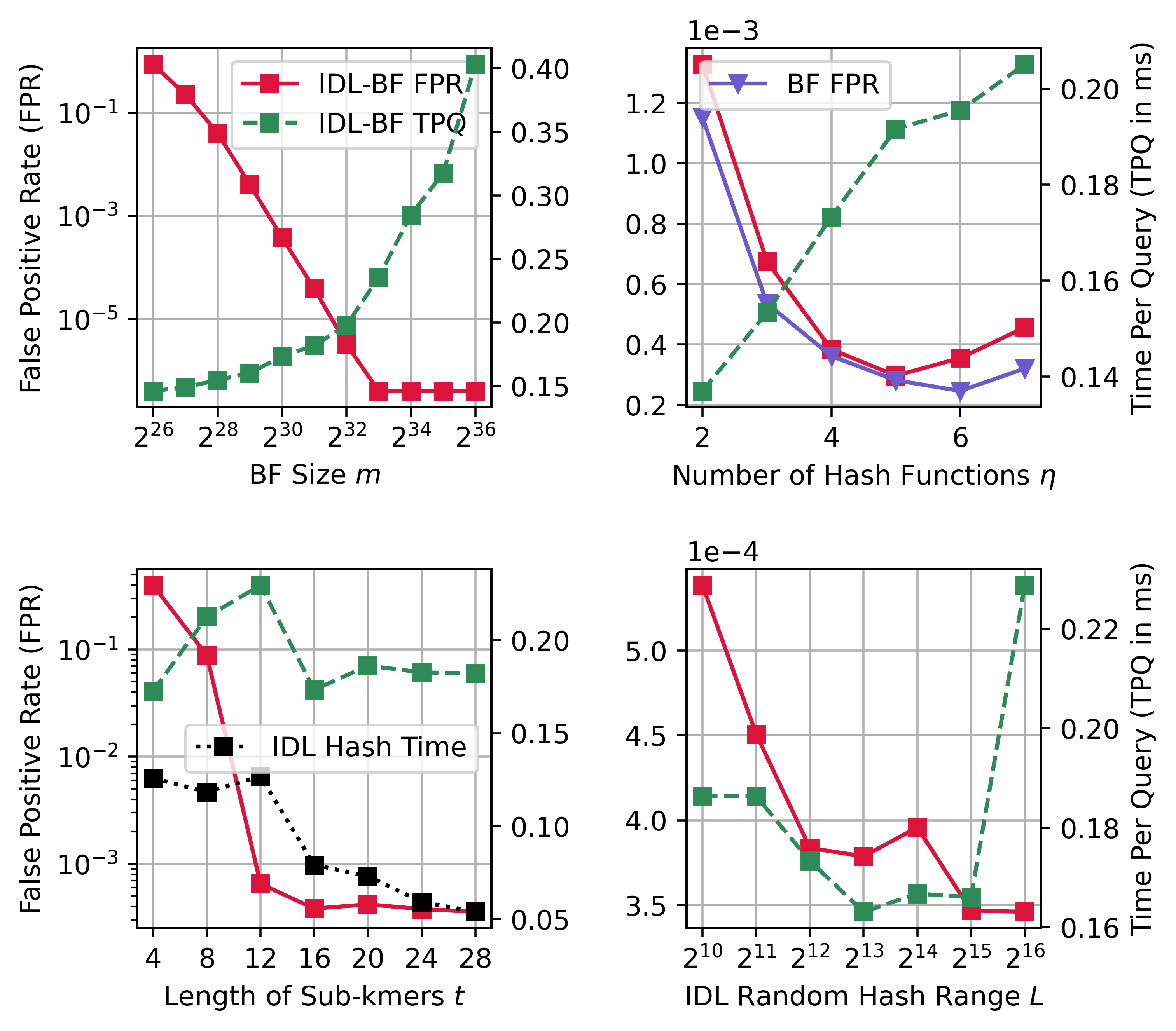}
   \caption{The effects of different values of BF size $m$, number of hash functions $\eta$, length of sub-kmer size $t$, and the random hash range $L$ of \name on FPR and query time.}
\label{fig:ablation}
\vspace{-2mm}
\end{figure}

In this ablation study, we perform experiments to study the effects of all the free parameters of \name on FPR and query time of \namens-BF. We list the tunable parameters of \namens-BF as follows:
\begin{itemize}[nosep, leftmargin=*]
    \item the number of bits $m$ in the array underlying the BF,
    \item the number of \name hash functions $\eta$ used for the BF,
    \item the length of sub-kmers $t$,
    \item and the range $L$ of RH function in \namens.
\end{itemize}
We choose a set of parameters for \namens-BF that achieves less than $10^{-3}$ FPR while achieving good efficiency as the base parameters and vary the free parameters to study the effects they have on FPR and query time. The results are shown in figure \ref{fig:ablation}. We make the following observations.

\begin{itemize}[leftmargin=*]
    \item \textbf{The trend of FPR:} The FPR is expected to decrease with increasing $m, t$ or $L$, while FPR initially decreases then increases as $\eta$ increases. With increasing BF size $m$ or \name random hash range $L$, the hash collision rate decreases hence reducing the FPR. With increasing sub-kmer length $t$, two random kmers are less likely to share common sub-kmers, effectively decreasing the locality property of \name. On the other hand, increasing $\eta$ decreases the probability of two items sharing the same hash codes but increases the number of bits set to $1$ in the BF.
    \item \textbf{The trend of query time with $m$ and $\eta$:} The query time is expected to increase monotonically with $m$ and $\eta$. Increasing the BF size $m$ increases the number of cache misses and page faults, leading to higher query time. Increasing the number of hash functions $\eta$ increases the amount of hash calls and check operations, hence increasing the query time.
    \item \textbf{High value of $L$ leads to page faults:} By using values of $L$ less than $2^{16}$, \namens-BF achieves similar query time, but there is a significant increase in query time when going from $L=2^{15}$ to $L=2^{16}$. This can be interpreted as more page faults during querying. Since the page size in our system is $2^{15}$ bits, increasing $L$ beyond that will result in significantly more page faults, leading to much higher query time. Therefore, we recommend using values of $L$ at or slightly below the page size to achieve the best search quality and efficiency.
    \item \textbf{The trend of hash time with $t$:} The length of sub-kmers $t$ heavily affects the hash time of \namens. Going from $t=12$ to $t=16$ leads to a significant drop in hash time, which in turn decreases query time. This is due to the use of a segment tree in our implementation for efficient computation of the minimum operation (see Algorithm~\ref{alg:rollingmh}). Increasing $t$ from $12$ to $16$ reduces the size of the segment tree by more than half, thus reducing the amount of computation during hash operations. Therefore, we recommend using $t=16$ for a kmer size of $31$ to strike a balance between cache efficiency and low computational overhead.
    \item \textbf{Choice of $\eta$:} We compare the FPR of \namens-BF with BF with increasing number of hash functions $\eta$ increases in the top right plot of Figure \ref{fig:ablation}. The two curves are closely aligned, implying \namens-BF and BF behave similarly with the same choice of $\eta$. Therefore, we recommend setting $\eta$ of \namens-BF to the optimal value of $\eta$ in the BF using a RH function and trying adjacent values of $\eta$ for parameter tuning.
\end{itemize}

\subsubsection{Replacing RH with LSH for gene search}

This section explores the impact of using LSH functions as a substitute for RH functions in gene search data structures. Specifically, we evaluate \textit{MinHash}, a type of LSH function, and compare it against RH and \name with regards to query time, cache misses, and FPR when utilized in a BF. Table \ref{tab:minhash} presents the results for three different sizes of BF for the same gene search task.

While \textit{MinHash} demonstrates superior cache efficiency in terms of lower query time and cache miss rates compared to RH and \namens, its high FPR makes it unsuitable for gene search. The high FPR associated with LSH is attributed to the loss of input identity due to more collisions. In contrast, our proposed approach \namens, which combines both LSH and RH, provides the best balance between cache efficiency and search quality.
\begin{table}[]
\caption{A comparison of \textit{MinHash} with RH and IDL as the hash function of BF for gene search. \textit{MinHash} exhibits the best cache efficiency out of the three hash functions, shown by the lowest query time and cache miss rates, but at the cost of much worse FPR, due to a high number of hash collisions.}
\label{tab:minhash}
\centering
\small
\setlength{\tabcolsep}{4pt}
\begin{tabular}{|r|l|r|r|r|r|}
\hline
\multicolumn{1}{|l|}{\multirow{2}{*}{\textbf{BF Size}}} & \multirow{2}{*}{\textbf{Hash}} & \multicolumn{1}{l|}{\textbf{Time Per}}   & \multicolumn{1}{l|}{\textbf{L1 Cache}}  & \multicolumn{1}{l|}{\textbf{L3 Cache}}  & \multicolumn{1}{l|}{\multirow{2}{*}{\textbf{FPR}}} \\
\multicolumn{1}{|l|}{}                         &                       & \multicolumn{1}{l|}{\textbf{Query (ms)}} & \multicolumn{1}{l|}{\textbf{Miss Rate}} & \multicolumn{1}{l|}{\textbf{Miss Rate}} & \multicolumn{1}{l|}{}                     \\ \hline
\multirow{3}{*}{$2^{30}$}                    & MinHash               & 0.389                           & 2.40\%                         & 0.50\%                         & 7.704E-2                                  \\ \cline{2-6} 
                                               & RH                & 0.576                           & 25.19\%                        & 0.71\%                         & 3.613E-4                                  \\ \cline{2-6} 
                                               & IDL                   & 0.479                           & 13.30\%                        & 0.70\%                         & 3.836E-4                                  \\ \hline
\multirow{3}{*}{$2^{32}$}                    & MinHash               & 0.492                           & 2.39\%                         & 1.26\%                         & 7.509E-2                                  \\ \cline{2-6} 
                                               & RH                & 0.682                           & 25.14\%                        & 9.12\%                         & 1.619E-6                                  \\ \cline{2-6} 
                                               & IDL                   & 0.533                           & 13.26\%                        & 4.78\%                         & 3.237E-6                                  \\ \hline
\multirow{3}{*}{$2^{34}$}                   & MinHash               & 0.429                           & 2.39\%                         & 1.62\%                         & 7.460E-2                                  \\ \cline{2-6} 
                                               & RH                & 0.871                           & 25.13\%                        & 14.80\%                        & 0                                  \\ \cline{2-6} 
                                               & IDL                   & 0.569                           & 13.25\%                        & 7.29\%                         & 0                                  \\ \hline
\end{tabular}
\end{table}



\section{Future Work}
The IDL hash is a drop in replacement of random hash and thus is easy to implement in various hashing-based systems; it can improve the retrieval efficiency of systems that exhibit similarities among temporally close queries by using LSH functions that capture the similarity. Although this paper focuses on the specific application of genome search using Bloom filters, the general recipe of IDL can be potentially extended to other information retrieval applications. For instance, IDL can be used in various large-scale industrial applications involving image/text-tokenized data where search is based on exact patch/token matching and queries show temporal correlations. Another potential application of IDL is searching through extensive system-error/network logs and music search systems. We leave the exploration of further applications of IDL as future work.
\section{Conclusion}

Hash functions are critical for efficient data mining systems. However, currently used classes of hash functions such as random hash and \textit{locality sensitive hash} fall short of correct utilization of system architecture. This paper highlights the issue of cache/page inefficiency in gene sequence search and proposes a new class of hash functions \textit{\longname}~(\namens) that is able to retain good properties of existing hash functions while improving the system performance. Specifically, with \name we improve gene sequence search indexing and query times by a huge margin.
\section{Proofs}
\subsection{Proof of Theorem 1}
\textbf{Case 1:} If $d_U(x,y) \leq r_1$, then,
\begin{align*}
    & \mathbf{Pr}\Big(\mathbf{1}(\psi(x) \neq \psi(y)) \wedge d_V(\psi(x), \psi(y)) < L \Big) = \mathbf{Pr}\Big(\phi(x) = \phi(y)\Big)\frac{L-1}{L} \\
    & \qquad + \mathbf{Pr}\Big(\phi(x) \neq \phi(y) \wedge ((\mathbf{1}(\psi(x) \neq \psi(y)) \wedge d_V(\psi(x), \psi(y)) < L ))\Big)\\
    & \ \ \ \ \ \ \ \ \geq \mathbf{Pr}\Big(\phi(x) = \phi(y)\Big)\frac{L-1}{L}  = p_1 \frac{L-1}{L} 
\end{align*}
\textbf{Case 2:} If $d_U(x,y) > r_2$, then,
\begin{align*}
    & \mathbf{Pr}\Big( d_V(\psi(x), \psi(y)) < L \Big) = \mathbf{Pr}\Big(\phi(x) = \phi(y)\Big) \\
    & \; + \mathbf{Pr}\Big(\phi(x) \neq \phi(y) \wedge (|\rho_1(x) - \rho_1(y)| < L) \wedge ( (d_V(\psi(x), \psi(y)) < L)) \Big)\\
    & \leq p_2 + \mathbf{Pr}(\phi(x) \neq \phi(y) \wedge (\mathbf{Pr}(|\rho_1(x) - \rho_1(y)| < L)\\
    & \leq p_2 + (\mathbf{Pr}(|\rho_1(x) - \rho_1(y)| < L) = p_2 + \frac{L}{m}\\
\end{align*}

\subsection{Proof of lemma 1}
Any query kmer can have at most $(k-t+1)$ sub-kmers and each sub-kmer can be part of at most $(k-t+1)$ kmers in the data. Hence, there can be atmost $(k-t+1)^2$ kmers that have non-zero \textit{Jaccard similarity} and hence the probability of collision with query kmer.

\subsection{Proof of Theorem 2}
We make the following assumptions about the setup,
\begin{itemize}
    \item \textbf{A1:} $\zeta(x_i, x_j) = 0 \textrm{ if } |i - j| >= w_1$
    \item \textbf{A2:}$| \{x_i \; | \;  \zeta(q, x_i) > 0\}| \leq w_2$
\end{itemize}
Lemma 1 is true under these assumptions. Let us now compute false positive rate for one independent repetition.
\begin{align*}
 p_q  = \mathbf{Pr}_{h \leftarrow \mathcal{H}}\left(\bigvee_{i=1}^n \mathbf{1}(h(q) = h(x_i) \right) 
\end{align*}
Let $C = \{x_{i_j}\}_{j=1}^{w_2}$ be the tokens in the sequence which has non-zero collision probability with $q$
Also let us divide the tokens into sets of size $w_1$ ,
\begin{equation}
    X_i = \{x_{iw_i}, ... x_{(i+1)w_i}\}
\end{equation}
Let $\bar{X}_i$ be defined as
    $\bar{X}_i = \{x | x \in X_i \wedge x \notin C \}$. 
Then we can write,
\begin{align*}
 p_q & = \mathbf{Pr}_{h \leftarrow \mathcal{H}} \Bigg( \left(\bigvee_{x \in C} \mathbf{1}(h(q) = h(x) \right) 
 \vee \left(\bigvee_{x \in \cup \bar{X}_{2i}} \mathbf{1}(h(q) = h(x) \right) \\
 & \vee \left(\bigvee_{x \in \cup \Bar{X}_{2i+1}} \mathbf{1}(h(q) = h(x) \right) \Bigg)
\end{align*}
Using union bound, and using notation $\{ h(X) = h(x) | x \in X \}$ We can write,
\begin{align*}
 p_q  \leq  & \Bigg( \left( \sum_{x \in C} \left(\frac{\zeta(x, q)}{L} + \frac{\eta}{m}\right) \right)  + \mathbf{Pr}_{h \leftarrow \mathcal{H}} \left(\bigvee_{X=\bar{X}_{2i}} \mathbf{1}(h(q) \in h(X) \right) \\
 & + \mathbf{Pr}_{h \leftarrow \mathcal{H}}  \left(\bigvee_{X=\bar{X}_{2i+1}} \mathbf{1}(h(q) \in h(X) \right)\Bigg)
\end{align*}
\begin{lemma}
For distinct $i$ and $j$, the events of the type $\mathbf{1}(h(q) \in X_{2i})$ and  $\mathbf{1}(h(q) \in X_{2j})$ are independent. The events of the type $\mathbf{1}(h(q) \in X_{2i+1})$ and  $\mathbf{1}(h(q) \in X_{2j+1})$ are independent. 
\end{lemma}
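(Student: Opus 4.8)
The plan is to reduce the independence of the two membership events to the fact that the two blocks' hash values are computed from disjoint, hence independent, sources of randomness. Recall a single repetition uses $h=\psi$ with $\psi(x)=\rho_1(\phi(x))+\rho_2(x)$ as in Theorem~\ref{thm:genconstruction}, and, as in the standard \textit{Bloom filter} analysis, model $\phi,\rho_1,\rho_2$ as purely random, so that $\rho_1$ and $\rho_2$ assign independent uniform values to distinct inputs. Writing $h(X)=\{h(x):x\in X\}$ for the random set of hash values of a block, the event $\mathbf{1}(h(q)\in h(X))$ is $\{h(q)\in h(X)\}$, and $h(X)$ is determined by $\rho_1$ restricted to the MinHash images $\phi(X)$ together with $\rho_2$ restricted to the tokens $X$. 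Throughout I work with the $\bar X$ blocks that actually appear in the bound, namely the blocks with the query-similar set $C$ removed.

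The key step is to show that for distinct $i,j$ the randomness feeding $h(\bar X_{2i})$ and $h(\bar X_{2j})$ is disjoint, and here the even/odd partition is essential. Two distinct even blocks are never adjacent, so any token of one and any token of the other lie at sequence distance at least $w_1$. By Assumption~1 this forces $\zeta=0$, i.e. zero \textit{Jaccard similarity} of their sub-kmer sets; hence their MinHash images never coincide (MinHash collides with probability equal to the \textit{Jaccard similarity}) and the tokens themselves are distinct. Therefore $\phi(\bar X_{2i})\cap\phi(\bar X_{2j})=\emptyset$ and $\bar X_{2i}\cap\bar X_{2j}=\emptyset$, so $h(\bar X_{2i})$ and $h(\bar X_{2j})$ are functions of disjoint oracle evaluations and are independent. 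The same argument applies verbatim to two odd blocks.

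It remains to handle the single random variable $h(q)$ shared by both events, which I would do by conditioning on $h(q)=v$. Because the $\bar X$ blocks are obtained by deleting $C$, every remaining token has zero \textit{Jaccard similarity} with $q$, so $\phi(q)$ differs from all $\phi(x)$ and $q$ from all block tokens; hence $h(q)$ is independent of every $h(\bar X)$, and conditioning on $h(q)=v$ leaves the block hash sets at their original, mutually independent laws. Given $v$ the two events become $\{v\in h(\bar X_{2i})\}$ and $\{v\in h(\bar X_{2j})\}$, which are measurable with respect to the two disjoint independent families of oracle values and are thus conditionally independent; since moreover $\mathbf{Pr}(v\in h(\bar X_{2i}))\le |\bar X_{2i}|\,\eta/m\le w_1\eta/m$ uniformly in $v$, integrating over $v$ yields the stated independence and justifies the product form $1-\prod_i\big(1-\mathbf{Pr}(h(q)\in h(\bar X_{2i}))\big)$ used afterwards.

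The main obstacle is precisely this shared dependence on $h(q)$, together with pinning down why adjacent blocks must be excluded: neighbouring blocks share a boundary pair of tokens at sequence distance $1<w_1$, which may have positive \textit{Jaccard similarity} and thus a common MinHash value, coupling their $\rho_1$-randomness and breaking independence. The even/odd split is the device that guarantees the distance-$\ge w_1$ separation on which Assumption~1, and hence the entire argument, rests.
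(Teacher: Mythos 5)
The first thing to note is that the paper never actually proves this lemma: it is stated in the middle of the proof of Theorem~\ref{thm:main} and immediately used ("Using independence and rewriting"), so your proposal fills a genuine omission rather than paralleling an existing argument. Your two structural ingredients are the right ones, and they are exactly what the paper's surrounding text implicitly relies on: (i) distinct even (or distinct odd) blocks are separated by an intervening block of $w_1$ tokens, so Assumption~1 forces zero \textit{Jaccard similarity} across blocks, hence (for idealized \textit{MinHash}) distinct LSH images and distinct tokens, so the purely random $\rho_1$ and $\rho_2$ are evaluated on disjoint inputs for the two blocks; and (ii) because the blocks $\bar{X}_i$ have the query-similar set $C$ removed, the randomness feeding $h(q)$ is disjoint from that feeding any block. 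Your diagnosis of why adjacent blocks must be excluded (boundary tokens at distance $<w_1$ can share a \textit{MinHash} value and couple the $\rho_1$-randomness) is also correct. One detail to tighten in (i): the \textit{MinHash} function $\phi$ is itself shared randomness across blocks, so disjointness of the token sets alone does not make $h(\bar{X}_{2i})$ and $h(\bar{X}_{2j})$ independent; you should condition on $\phi$, note that the image sets $\phi(\bar{X}_{2i})$ and $\phi(\bar{X}_{2j})$ are almost surely disjoint, and then let the freshness of $\rho_1,\rho_2$ on those disjoint inputs do the decoupling.

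The genuine gap is your last step, the passage from conditional to unconditional independence. Write $A=\{h(q)\in h(\bar{X}_{2i})\}$, $B=\{h(q)\in h(\bar{X}_{2j})\}$, $f(v)=\mathbf{Pr}(v\in h(\bar{X}_{2i}))$ and $g(v)=\mathbf{Pr}(v\in h(\bar{X}_{2j}))$. Conditional independence given $h(q)=v$ gives $\mathbf{Pr}(A\cap B)=\mathbf{E}[f(h(q))\,g(h(q))]$, and this equals $\mathbf{Pr}(A)\,\mathbf{Pr}(B)=\mathbf{E}[f(h(q))]\,\mathbf{E}[g(h(q))]$ only if $f$ and $g$ are uncorrelated under the law of $h(q)$, e.g.\ if they are constant in $v$. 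The uniform \emph{bound} $f(v)\le w_1\eta/m$ that you invoke implies nothing of the kind: taking $f=g$ non-constant gives $\mathbf{E}[fg]>\mathbf{E}[f]\,\mathbf{E}[g]$, i.e.\ positive correlation, not independence. Worse, exact constancy genuinely fails in this construction, because $\psi(x)=\rho_1(\phi(x))+\rho_2(x)$ is a sum of independent uniforms on ranges $m/\eta$ and $L$, whose law is trapezoidal rather than uniform, so $f(v)$ is depressed for $v$ near the ends of the range; strictly speaking the lemma is only true up to edge effects of order $L\eta/m$ (a defect of the paper's unproved statement, not of your argument). The clean repair is to claim only what Theorem~\ref{thm:main} actually needs: conditionally on $h(q)=v$ the block events are independent, so $\mathbf{Pr}\big(\bigvee_X \{h(q)\in h(\bar{X})\}\big)=\mathbf{E}\big[1-\prod_X (1-f_X(h(q)))\big]\le 1-\prod_X\big(1-w_1\eta/m\big)$, where the inequality uses your uniform bound inside each factor and is preserved by averaging over $v$. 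This delivers exactly the product-form estimate used in the next line of the paper's proof, without ever asserting exact unconditional independence.
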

Using independence and rewriting, 
\begin{align*}
 p_q  \leq  & \sum_{x \in C} \left(\frac{\zeta(x, q)}{L} + \frac{\eta}{m} \right)   + \left( 1 -  \prod_{X=\bar{X}_{2i}} \left( 1 - \mathbf{Pr}_{h \leftarrow \mathcal{H}} \left( \mathbf{1}(h(q) \in h(X) \right) \right) \right)  \\
 & + \left( 1 -  \prod_{X=\bar{X}_{2i+1}} \left( 1 - \mathbf{Pr}_{h \leftarrow \mathcal{H}} \left( \mathbf{1}(h(q) \in h(X) \right) \right) \right)
\end{align*}
Applying union bound within sets and by replacing $\bar{X}_i$ with $X$, we only increase the RHS. Hence,
\begin{align*}
 p_q  \leq & \sum_{x \in C} \left(\frac{\zeta(x, q)}{L} + \frac{\eta}{m} \right)   + \left( 1 -  \prod_{X=X_{2i}} \left( 1 - \sum_{x \in X} \mathbf{Pr}_{h \leftarrow \mathcal{H}} \left( \mathbf{1}(h(q) = h(x) \right) \right) \right) \\
 & + \left( 1 -  \prod_{X=X_{2i+1}} \left( 1 - \sum_{x \in X} \mathbf{Pr}_{h \leftarrow \mathcal{H}} \left( \mathbf{1}(h(q) = h(x) \right) \right) \right) 
\end{align*}
\begin{align*}
 p_q  & \leq  \sum_{x \in C} \left(\frac{\zeta(x, q)}{L} + \frac{\eta}{m} \right)   + \left( 1 -  \prod_{X=X_{2i}} \left( 1 - \sum_{x \in X} \left( \frac{\eta}{m} \right) \right) \right)   \\
 & + \left( 1 -  \prod_{X=X_{2i+1}} \left( 1 - \sum_{x \in X} \left( \frac{\eta}{m} \right) \right) \right) 
\end{align*}
With equation crunching and using $\zeta(x,q) \leq 1$, we can simplify,
\begin{align*}
 p_q  \leq  w_2\left(\frac{1}{L} + \frac{\eta}{m} \right)   + 2 \left( 1 -  \left( 1 - \left( \frac{w_1\eta}{m} \right) \right)^{\frac{n}{2w_1}} \right) 
\end{align*}
Thus the false positive rates are bounded by
\begin{align*}
\epsilon  \leq & \left( w_2\left(\frac{1}{L} + \frac{\eta}{m} \right)   + 2 \left( 1 -  \left( 1 - \left( \frac{w_1\eta}{m} \right) \right)^{\frac{n}{2w_1}} \right) \right)^\eta \\ 
& \approx \left( w_2\left(\frac{1}{L} + \frac{\eta}{m} \right)   + 2 \left( 1 -  e^{-\frac{\eta n}{2m}}  \right) \right)^\eta
\end{align*}
Note for a given $\eta$, the if $m\rightarrow \infty$, the $\epsilon$ is upper bounded by $(\frac{w_2}{L})^\eta$.  For large enough $L$ and a reasonable value of $\eta$, this bound is also very small.


\bibliographystyle{ACM-Reference-Format}
\bibliography{ref.bib}

\end{document}